\documentclass{scrartcl}

\usepackage[utf8]{inputenc}
\usepackage[T1]{fontenc}
\usepackage[UKenglish]{babel}
\usepackage{mathtools}
\usepackage{amsthm}
\usepackage{tikz-cd}
\usepackage{quiver}
\usepackage{url}
\usepackage{hyperref}
\usepackage[disable]{todonotes} 
\usepackage{comment}

\usepackage[
, textwidth = 140mm
, textheight = 222mm
, margin = 38mm
, headheight = 3mm
, headsep = 10mm
, footskip = 5mm
]{geometry}

\usepackage{tabu}
\newcolumntype{H}{>{\setbox0=\hbox\bgroup}c<{\egroup}}

\usepackage{general-macros}
\usepackage{local-macros}

\usepackage{cleveref}
\usepackage{thmstyle}

\usepackage[normalem]{ulem}

\usepackage{tikz-cd}
\usepackage{float}
\usetikzlibrary{arrows, automata, shapes, calc, fadings} 
\tikzset{->, auto, >=stealth', font=\small}
\tikzset{state/.style={shape=circle, draw, fill=white, initial text=,
    inner sep=.5mm, minimum size=1.5mm}}
\tikzset{accepting/.style=accepting by arrow}
\tikzset{state with output/.style={shape=rectangle split, rectangle
    split parts=2, draw, fill=white,
    initial text=, inner sep=1mm}}
\tikzset{
  pomset/.style={
    commutative diagrams/every diagram,
    row sep=0.02em,
    column sep=0.2cm,
    commutative diagrams/cramped,
    every matrix/.append style={inner sep=+-0.3em},
    every cell/.append style={inner sep=+0.3em},
    baseline=-1pt,
    every matrix/.append style={
      matrix of math nodes,
      left delimiter=(,
      right delimiter=),
      commutative diagrams/cramped
    }
  }
}

\definecolor{formalcolour}{HTML}{1E88E5}

\title{Irrationality of Process Replication for Higher-Dimensional Automata}
\date{Pre-Print}
\author{%
  Thomas Baronner\thanks{Student, Leiden University \url{mailto:thomasbaronner@gmail.com}}
  \and
  Henning Basold\thanks{LIACS, Leiden University, \url{mailto:h.basold@liacs.leidenuniv.nl}}
  \and
  Márton Hablicsek\thanks{MI, Leiden University \url{mailto:m.hablicsek@math.leidenuniv.nl}}
}


\begin{document}

\maketitle

\begin{abstract}
Higher-dimensional automata (HDA) are a formalism to faithfully model the behaviour of
concurrent systems.
For ordinary automata, there is a correspondence between regular expressions, regular languages and
finite automata, which provides a powerful link between algebraic proofs and operational behaviour.
It has been shown by Fahrenberg~et~al. that finite HDA correspond with interfaced interval pomset
languages generated by sequential and parallel composition and non-empty iteration, and thereby to
a variant of Kleene algebras (KA) with parallel composition.
It is known that this correspondence cannot be extended to concurrent KA, which
additionally have process replication.
An alternative to finite HDA are locally finite HDA, in which every state can only reach finitely many other states, and finitely branching HDA.
In this paper, we show that both classes of HDA are closed under process replication and thus models of concurrent KA.
To achieve this, we prove that the category of HDA is locally finitely presentable, where the finite HDA generate all other HDA.
We then prove that this has the unfortunate side-effect that all HDA are locally finite, which means that the correspondence with concurrent KA trivialises.
Similarly, we also show that, even though finitely branching HDA are closed under process replication, the resulting HDA necessarily have infinitely many initial states.
\end{abstract}

\section{Introduction}
\label{sec:introduction}

Automata theory has as a core goal that problems, like deciding language membership, should be
solved by finitary means.
With this goal in mind, research on automata typically strives for a correspondence between
certain kinds of finitary automata, languages, syntactic expressions, and algebras.
The classical example of this correspondence is between finite (non-)deterministic automata,
regular languages, free Kleene algebras (regular expressions), and finite syntactic monoids.
In the area of concurrency, such correspondences have been sought as
well~\cite{EN04:HigherDimensionalAutomata,%
FJS+22:KleeneTheoremHigherDimensional,%
Grabowski81:PartialLanguages,%
KBL+19:SeriesparallelPomsetLanguages,%
LW00:SeriesParallelLanguages}.
Several automata models have emerged from this as did the notion of concurrent Kleene
algebras~\cite{HMS+09:ConcurrentKleeneAlgebra,HMS+11:ConcurrentKleeneAlgebra}, which extend
Kleene algebras with parallel computation and process replication (also called parallel closure).
Concurrent Kleene algebras (CKA) correspond to several
automata models~\cite{KBL+19:SeriesparallelPomsetLanguages,LW00:SeriesParallelLanguages}.

Parallel to automata models for CKA, several operational models of
true concurrency have been developed, such as Petri nets and higher-dimensional automata (HDA).
These are models that can faithfully represent parallel computation without having
to resort to sequentialisation~\cite{Glabbeek06:ExpressivenessHDA}.
HDA have received a lot of attention because of the
geometric view on concurrency that they offer~\cite{Fahrenberg13:HistoryPreservingBisimilarityHigherDimensional,%
Fajstrup98:DetectingDeadlocksConcurrent,%
Goubault00:GeometryConcurrencyUser,%
Kahl18:LabeledHomologyHigherdimensional,%
Pratt91:ModelingConcurrencyGeometry,%
Raussen21:ConnectivitySpacesDirected,%
Glabbeek06:ExpressivenessHDA}.
Fahrenberg~et~al. proved a correspondence between finite HDA and Kleene algebras (KA)
with parallel composition and that KA with process replication cannot be given semantics in
terms of finite HDA~\cite[Lemma~12]{FJS+22:KleeneTheoremHigherDimensional}.
We show in this paper that process replication can also not be realised as neither locally compact HDA, in which
every state can only reach finitely many other states~\cite{BMS13:SoundCompleteAxiomatizations,%
Milius10:SoundCompleteCalculus,%
Milius13:RationalOperationalModels}, nor as finitely branching HDA with finitely initial states.
Our approach is to prove that the category of HDA is locally finitely presentable, which allows us to
define the language of HDA in terms of languages of finite HDA~\cite{Fahrenberg21:LanguagesHigherDimensionalAutomata},
prove that any HDA is locally compact HDA and that process replication cannot be realised in any finitary way over HDA.

Let us briefly discussion the intuition behind HDA.
The idea is that they generalise labelled transition systems to allow for $n$ actions to be
active simultaneously by modelling transitions as $n$-cells in higher-dimensional cubes.
For instance, \cref{fig:transitions-hda} shows a graphical representation of a HDA over an alphabet
with actions $\set{a,b,c,d}$.
\begin{figure}[t]
  \centering
  \vspace*{-1.5em}
  \begin{tikzcd}[row sep=0.6cm, column sep=2cm,
    execute at end picture={
      \foreach \n in  {A,B,...,F}
      {\coordinate (\n) at (\n.center);}
      \fill[formalcolour!40,opacity=0.3]
      (A) -- (B) -- (C) -- (D) -- cycle
      (D) -- (C) -- (E) -- (F) -- cycle;
    }]
    |[inner sep=0pt]|{\cdot} \ar[r, "b"]
    & |[alias=G,inner sep=0pt]|{\cdot}
    & |[alias=H,inner sep=0pt]|{\cdot}
    \\
    |[alias=B,inner sep=0pt]|{\cdot} \ar[r, "b"] \ar[u, "c"]
    & |[alias=C,inner sep=0pt]|{\cdot} \ar[r, "d"] \ar[u, "c"]
    & |[alias=E,inner sep=0pt]|{\cdot} \ar[u, "c"]
    \\
    |[alias=A,inner sep=0pt]|{\cdot} \ar[r, "b"{swap}] \ar[u, "a"]
    \ar[urr,dashrightarrow,controls={+(0.5,0.9) and +(-1,-0.5)}]
    \ar[urr,dashrightarrow,controls={+(2.5,0.3) and +(-1,-0.5)}]
    & |[alias=D,inner sep=0pt]|{\cdot} \ar[r, "d"{swap}] \ar[u, "a"]
    & |[alias=F,inner sep=0pt]|{\cdot} \ar[u, "a"]
    \ar[Rightarrow, from=G, to=G, start anchor={east}, end anchor={[xshift=2em]east}]
    \ar[Rightarrow, from=H, to=H, start anchor={east}, end anchor={[xshift=2em]east}]
  \end{tikzcd}
  \vspace*{-0.5em}
  \caption[Transitions in Higher-Dimensional Automata]{
    Event $a$ may happen in parallel with $b$ and $d$ (filled squares), while $c$ is
    in conflict with $b$ and $d$ (not filled);
    two parallel executions of $a$ and $b$, and $a$ and $d$ are indicated by the dashed homotopic
    paths; cells with double arrows are accepting cells}
\label{fig:transitions-hda}
\end{figure}
The dots indicate $0$-cells, in which no action is active, solid arrows are $1$-cells that are
transitions with one active action, and the blue shaded areas are $2$-cells with two active actions.
Starting from the bottom left, first $a$ and $b$ may be active in parallel and any execution
path through the shaded area is allowed.
In the square above that, the action $c$ and $b$ have to be executed sequentially because the
square is not filled.
The HDA accepts a run if one of the $0$-cells with a double arrow
is reached.
For instance, the (sequential) path $a \to b \to c$ is accepted.
HDA accept in general pomset languages~\cite{Fahrenberg21:LanguagesHigherDimensionalAutomata}.
In the case of \cref{fig:transitions-hda}, the accepted language is the following
set consisting of ten pomsets.
\begin{equation*}
  \biggl\{
    \left(
  \begin{tikzcd}[row sep=0.05em, column sep=0.3cm, cramped]
    a \ar[r] & b \ar[r] & c
  \end{tikzcd}
  \right),
    \left(
  \begin{tikzcd}[row sep=0.05em, column sep=0.3cm, cramped]
    a \ar[r] & c \ar[r] & b
  \end{tikzcd}
  \right),
    \left(
  \begin{tikzcd}[row sep=0.05em, column sep=0.3cm, cramped]
    b \ar[r] & a \ar[r] & c
  \end{tikzcd}
  \right),
  \left(
  \begin{tikzcd}[row sep=0.05em, column sep=0.3cm, cramped]
    a \ar[r] & b \ar[r] & d \ar[r] & c
  \end{tikzcd}
  \right),
    \left(
  \begin{tikzcd}[row sep=0.05em, column sep=0.3cm, cramped]
    b \ar[r] & d \ar[r] & a \ar[r] & c
  \end{tikzcd}
  \right)
\end{equation*}
\vspace*{-1em}
\begin{equation*}
  \begin{tikzpicture}[pomset]
    \matrix[name=m, commutative diagrams/cramped] {
      a &   \\
        & c \\
      b &   \\
    };
    \path[commutative diagrams/.cd, every arrow, every label]
    (m-1-1) edge (m-2-2)
    (m-3-1) edge (m-2-2);
  \end{tikzpicture},
  \begin{tikzpicture}[pomset]
    \matrix[name=m, commutative diagrams/cramped] {
        & a &   \\
        &   & c \\
      b & d &   \\
    };
    \path[commutative diagrams/.cd, every arrow, every label]
    (m-1-2) edge (m-2-3)
    (m-3-1) edge (m-3-2)
    (m-3-2) edge (m-2-3);
  \end{tikzpicture},
  \begin{tikzpicture}[pomset]
    \matrix[name=m, commutative diagrams/cramped] {
      a &   &   \\
        & d & c \\
      b &   &   \\
    };
    \path[commutative diagrams/.cd, every arrow, every label]
    (m-1-1) edge (m-2-2)
    (m-2-2) edge (m-2-3)
    (m-3-1) edge (m-2-2);
  \end{tikzpicture},
  \begin{tikzpicture}[pomset]
    \matrix[name=m, commutative diagrams/cramped] {
        & a &   \\
      b &   & c \\
        & d &   \\
    };
    \path[commutative diagrams/.cd, every arrow, every label]
    (m-1-2) edge (m-2-3)
    (m-2-1) edge (m-1-2) edge (m-3-2)
    (m-3-2) edge (m-2-3);
  \end{tikzpicture}
  \Biggr\}
\end{equation*}
The first six are purely sequential runs, while the last four run $a$, $b$, $c$ and $d$ in parallel.
Pomset languages can be composed with the operations of concurrent Kleene algebras, and
one may then ask which of these operations carry over to HDA and may result in a correspondence
between (locally) finite HDA and rational pomset languages constructed from these operations.

\paragraph*{Outline and Contributions}
\label{sec:contributions}

We show in \cref{sec:compact-hda} that the category of HDA is
locally finitely presentable (lfp) and that finite HDA are exactly the compact (or finitely presentable) objects.
This allows the reduction of arguments to finite HDA.
In \cref{sec:comp-hda}, we show that languages of coproducts and filtered colimits
of HDA are given directly by the languages of the HDA in the corresponding diagrams, and that
this fails for general colimits.
We also give in \cref{sec:hda-monoidal} a novel characterisation of the tensor product of
HDA, and then use this and the lfp property to show that the tensor product yields the
parallel composition of languages.
In \cref{sec:replication} we present two possible local finiteness conditions for HDA that are stable under process replication.
We then show that both notions involve some infinite branching and we end with a proof
that it is impossible to realise process replication without infinite branching.
We begin with a recap of the theory of pomset languages
in \cref{sec:conc-words} and of HDA in \cref{sec:hda}.

\paragraph*{Related Work}
\label{sec:related-work}

The work of Lodaya and Weil~\cite{LW00:SeriesParallelLanguages} offers another automaton model for
concurrency, called branching automata, as well as an algebraic perspective.
Interestingly, their correspondence is restricted to languages of bounded width.
Our result in \cref{sec:replication} could be extended to show that finitely branching HDA
correspond to languages of bounded width, but we do not explore this further, as bounded width
languages can be realised without process replication.

Ésik and Németh~\cite{EN04:HigherDimensionalAutomata} prove a correspondence between rational
languages of \emph{series-parallel biposets}, which are essentially pomsets, and finite
parenthesising automata.
Such automata have two kinds of states and transition relations that can be thought of as
$0$- and $1$-cells, and transitions among them (respectively $1$- and $2$-cells) and transitions
up and down one dimension and that are guarded by parentheses.
Thus, they make HDA more flexible in that they allow dimension change but also restrict the
dimensions.

Jipsen and Moshier~\cite{JM16:ConcurrentKleeneAlgebra} reiterate on branching
automata~\cite{LW00:SeriesParallelLanguages} but improve them by adding a bracketing condition
akin to parenthesising automata~\cite{EN04:HigherDimensionalAutomata}.

Kappé et al.~\cite{Kappe20:ConcurrentKleeneAlgebra,%
KBL+17:BrzozowskiGoesConcurrent,%
KBL+19:SeriesparallelPomsetLanguages}
have shown that finite well-nested pomset automata correspond to concurrent Kleene algebras and,
what they call, series-parallel rational expressions.
Pomset automata have two transition functions, one for sequential and one for parallel computation.
The latter can branch out to finitely many parallel states and synchronise after each has completed
their work.
This allows them to implement process replication because the number of parallel processes can grow
arbitrarily during execution, while the dimension of a cell in a HDA fixes the number of parallel
processes.
We will discuss this in \cref{sec:concl}.

Finally, our work builds on that of Fahrenberg~et~al.~\cite{FJS+22:KleeneTheoremHigherDimensional}.
For the most part, we follow them in our definitions of HDA and languages, but also deviate in some choices, like the definition of the cube
category and the tensor product of HDA.
We have also incorporated their insight to give up event consistency~\cite{Fahrenberg21:LanguagesHigherDimensionalAutomata},
as the category of HDA would otherwise not be
cocomplete~\cite{Baronner22:FiniteAccessibilityHigherDimensional}.

\paragraph*{Acknowledgements}
\label{acknowledgements}

We would like to thank the referees for their valuable comments, in particular the suggestion of an alternative proof
strategy for \cref{hda-lfp}.

\section{Concurrent Words via Ipomsets}
\label{sec:conc-words}
In this section, we recap the theory of interval ipomsets and their languages,
sequential composition, parallel composition and
parallel Kleene closure~\cite{Fahrenberg21:LanguagesHigherDimensionalAutomata}.

\subsection{Ipomsets}
\label{sec:ipomsets}

\begin{definition} \label{labelled-iposet:def}
  A \emph{labelled iposet} $P$ is a tuple
  $\parens*{\car{P},<_{P},\dr_{P},S_{P},T_{P},\lambda_{P}}$ where
\begin{itemize}
    \item $\car{P}$ is a finite set,
    \item $<_{P}$ is a strict partial order on $\car{P}$ called \emph{precedence order},
    \item $\dr_{P}$ is a strict partial order on $\car{P}$, called \emph{event order}, that is
      linear on $<_{P}$-antichains,
    \item $\lambda_{P} \from \car{P} \to \Sigma$ is a labelling map to an alphabet $\Sigma$,
    \item $S_{P} \subseteq \car{P}$ is a set of $<_{P}$-minimal elements called the
      \emph{source set}, and
    \item $T_{P} \subseteq \car{P}$ is a set of $<_{P}$-maximal elements called the
      \emph{target set}.
\end{itemize}

\end{definition}
We write $\emptyLO$ for the empty iposet.
Note that the condition that $\dr_{P}$ is linear on $<_{P}$-antichains implies that the union $\dr_{P} \cup <_{P}$ is a total order.
\begin{definition}
  We say that a labelled iposet $P$ is \emph{subsumed} by a labelled iposet $Q$,
  written $P\sqsubseteq Q$, if there exists a bijection $f \from \car{P} \to \car{Q}$ with
  $f\parens*{S_P} = S_Q$, $f \parens*{T_P} = T_Q$ and such that for all $x,y\in \car{P}$ we have
  \begin{enumerate}
  \item $f(x)<_Qf(y)\implies x<_Py$
  \item $x\dr_Py\text{, }x\not<_Py\text{, }y\not<_Px\implies f(x)\dr_Qf(y)$
  \item $\lambda_P(x)=\lambda_Q\circ f(x)$
  \end{enumerate}
  The labelled iposets $P$ and $Q$ are isomorphic if $f$ is an isomorphism for both orders.
  An \emph{ipomset} is an isomorphism class of labelled iposets.
\end{definition}
$P\sqsubseteq Q$ intuitively means that $P$ is more ordered by the precedence order $<$ than $Q$,
which means that $P$ has less ``concurrency''.
Isomorphisms between labelled iposets are unique, which means that any skeleton of the
category of labelled iposets and subsumptions is isomorphic to the quotient by isomorphisms.
\begin{definition}
  An ipomset $P$ is an \emph{interval ipomset} if there is a pair of functions
  $b,e \from \car{P} \to \R$ to the real numbers, such that $b(x) \leq e(x)$ for all
  $x\in \car{P}$ and $x <_P y \iff e(x)<b(y)$ for all $x,y\in \car{P}$.
  The pair of functions $(b,e)$ is called an \emph{interval representation} of $P$.
  We let $\iiPom$ be the set of all interval ipomsets.
\end{definition}

The simplest example of an ipomset that is not interval is the ipomset $P$ with $\car{P}=\left\{a,b,c,d\right\}$ with only $a<_{P}b$, $c<_{P}d$, $a \dr_{P} c$ and $b \dr_{P}d$. This is the ipomset variant of the $(2+2)$-poset.
Note that the event order plays no role in the interval representation.

Given a set of interval ipomsets $A\subseteq\iiPom$, the down-closure of $A$ is defined as
usual by $\downCl{A} = \setDef{P\in\iiPom}{\exist{Q \in A} P\sqsubseteq Q}$.
\begin{definition} \label{lang-definition}
  A \emph{language} $L$ of interval ipomsets is a down-closed set of interval ipomsets, that is,
  if $\downCl{L} \subseteq L$ holds.
  We denote by $\Lang$ the thin category with languages as objects and subset inclusions as
  morphisms.
\end{definition}

\subsection{Composition of ipomsets and languages}
\label{sec:composition}

\begin{definition}
  \label{def:glueing}
  We say that ipomsets $P$ and $Q$ \emph{sequentially match} if there is a (necessarily unique)
  isomorphism $f \from (T_P, \dr_{P}) \to (S_Q, \dr_{Q})$ with $\lambda_{Q} \comp f = \lambda_{P}$.
  If $P$ and $Q$ match sequentially, then we define the \emph{gluing composition} by
  \begin{equation*}
    P \glue Q
    = \parens*{\car{P \glue Q},<_{P \glue Q},\dr_{P \glue Q},S_P,T_Q,\lambda_{P \glue Q}} \, ,
  \end{equation*}
  where $(\car{P \glue Q}, \dr_{P \glue Q})$ is the pushout
  $\colim\parens*{(\car{P}, \dr_{P}) \hookleftarrow T_{P} \xrightarrow{f} (\car{Q}, \dr_{Q})}$
  of posets of $f$ along the inclusion.
  The precedence order $<_{P \glue Q}$ is the union of the images of
  $<_P$, $<_Q$ and $(\car{P} \setminus T_{P}) \times (\car{Q} \setminus S_{Q})$ in
  $\car{P \glue Q}$.
  Finally, the labelling function $\lambda_{P \glue Q} \from \car{P \glue Q} \to \Sigma$ is defined
  as the copairing $\copair{\lambda_P, \lambda_{Q}}$ on the pushout using that $f$ preserves
  labelling.
\end{definition}

If $P$ and $Q$ are interval ipomsets, then their gluing composition $P\glue Q$ is an interval
ipomset as well~\cite[Lem.~41]{Fahrenberg21:LanguagesHigherDimensionalAutomata}.
This uses that the map $f$, which attaches the interfaces, is an order isomorphism
and that the event order is linear.

If the interfaces $T_{P}$ and $S_{Q}$ are empty, then $P \glue Q$ is the coproduct of
$(\car{P}, \dr_{P})$ and $(\car{Q}, \dr_{Q})$, and at the same time the join of
$(\car{P}, <_{P})$ and $(\car{Q}, <_{Q})$ considered as categories.
This amounts to the serial pomset composition~\cite{FJS+22:KleeneTheoremHigherDimensional}, which is the
generalisation of concatenation of words to pomsets.
\begin{definition}\label{def:glueing-lang}
  The sequential composition of languages $L_1$ and $L_2$ is defined as
\begin{equation*}
  L_1\glue L_2=\downCl{\setDef{P\glue Q}{
      P\in L_1, Q\in L_2 , \text{and } P \text{ and } Q \text{ match sequentially}}}
\end{equation*}
\end{definition}
\begin{definition} \label{def:parallel}
  We define the \emph{parallel composition} of ipomsets $P$ and $Q$ by
\[
  P\parallel Q = \left(\car{P}+\car{Q},<_{P \parallel Q},\dr_{P \parallel Q},S_{P\parallel Q},T_{P\parallel Q},\lambda_{P\parallel Q}\right)
\]
Let $i_P:\car{P}\to\car{P}+\car{Q}$ and $i_Q:\car{Q}\to\car{P}+\car{Q}$ be the canonical injection maps. Using these injection maps we define $<_{P\parallel Q}=i_P\left(<_P\right)\cup i_Q\left(<_Q\right)$, $S_{P\parallel Q}=i_P\left(S_P\right)\cup i_Q\left(S_Q\right)$, $T_{P\parallel Q}=i_P\left(T_P\right)\cup i_Q\left(T_Q\right)$ and
$\lambda_{P\parallel Q} = \copair{\lambda_{P},\lambda_{P}}$.
Then $\dr_{P\parallel Q}$ is defined as the ordered sum of the event orders, in other words, $i_P$ preserves the order $\dr_P$ as $\dr_{P\parallel Q}$ and $i_Q$ preserves $\dr_Q$ as $\dr_{P\parallel Q}$ and for all $x\in\car{P}$, $y\in\car{Q}$ we have $i_P(x)\dr_{P\parallel Q}i_Q(y)$.
\end{definition}

Differently said, the event order $\dr_{P\parallel Q}$ on the parallel composition $P\parallel Q$ is defined as the join of $(\car{P},\dr_{P})$ and $(\car{Q},\dr_{Q})$ thought of as categories.

\begin{definition} \label{def:paralell-lang}
  The parallel composition of languages $L_1$ and $L_2$ is defined as
\[
L_1\parallel L_2=\downCl{\left\{P\parallel Q\mid P\in L_1\text{, }Q\in L_2\right\}}
\]
and the \emph{parallel Kleene closure} of a language $L$ as
\begin{equation*}
  L^{(\ast)} = \bigcup_{n \in \N} L^{\parallel n}
  \quad \text{where} \quad
  L^{\parallel 0} = \set{\emptyLO}
  \, \text{ and } \,
  L^{\parallel (n+1)} = L \parallel \bparens{L^{\parallel n}} \, .
\end{equation*}
\end{definition}

Down-closure is needed in Definitions \ref{def:glueing-lang} and \ref{def:paralell-lang}, since sequential or parallel compositions of down-closed languages may not result in a down-closed language.
However, we can form unions of languages.
\begin{lemma} \label{down-closure-union}
  Languages are closed under arbitrary unions.
\end{lemma}
\begin{proof}
Let $L \from D\to\Lang$ be a small diagram of down-closed interval ipomset languages and let $L_{\cup}=\bigcup_{d\in D}L_d$ be their union. Then for every $Q\in L_{\cup}$ there exists at least one $d\in D$ such that $Q\in L_d$, which means that $Q$ has to be an interval ipomset. Moreover for every $P\in\iiPom$ with $P\sqsubseteq Q$ we by definition have $P\in L_d$ which means that we have to have $P\in L_{\cup}$ as well. Therefore $L_{\cup}$ is down-closed as well.
\end{proof}

We conclude this section by showing that the parallel composition of languages respects small colimits, which are unions in the category of languages.

\begin{lemma} \label{parallel-lang-union-commute}
  For small diagrams $M:D\to\Lang$ and $N:E\to\Lang$ of languages we have
  \begin{equation*}
    \smashoperator{\bigcup_{(d,e)\in D\times E}} M_d\parallel N_e
    = \parens*{\bigcup\nolimits_{d\in D}M_d} \bigparallel[\Big] \parens*{\bigcup\nolimits_{e\in E}N_e}
  \end{equation*}
\end{lemma}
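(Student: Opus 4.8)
The plan is to reduce the identity to a short chain of set-theoretic equalities obtained by unfolding the definition of parallel composition of languages (\cref{def:paralell-lang}). First I would record the shape of the two sides. Since $\Lang$ is thin and ordered by inclusion, the colimit of a small diagram is simply the union of its objects: a union of down-closed subsets of $\iiPom$ is again down-closed (if $R \sqsubseteq Q \in \bigcup\nolimits_{d} M_d$ then $Q \in M_d$ for some $d$, whence $R \in M_d$ by down-closedness), so $\bigcup\nolimits_{d} M_d$ and $\bigcup\nolimits_{e} N_e$ are genuine objects of $\Lang$ and the morphisms of $D$ and $E$ play no role. Expanding the right-hand side via \cref{def:paralell-lang} then gives
\[
  \parens*{\bigcup\nolimits_{d} M_d} \parallel \parens*{\bigcup\nolimits_{e} N_e}
  = \downCl{\setDef{P\parallel Q}{P \in \bigcup\nolimits_{d} M_d,\ Q \in \bigcup\nolimits_{e} N_e}} .
\]

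Next I would factor the generating set over the index category $D \times E$. Membership in a union is existential, so $P \in \bigcup\nolimits_{d} M_d$ exactly when $P \in M_d$ for some $d$, and likewise for $Q$; rearranging the two existentials yields
\[
  \setDef{P\parallel Q}{P \in \bigcup\nolimits_{d} M_d,\ Q \in \bigcup\nolimits_{e} N_e}
  = \smashoperator{\bigcup_{(d,e)\in D\times E}} \setDef{P \parallel Q}{P \in M_d,\ Q \in N_e} .
\]

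The key step, and the only one with any genuine content, is that down-closure commutes with arbitrary unions, i.e.\ $\downCl{\bigcup_i A_i} = \bigcup_i \downCl{A_i}$. This is immediate from the definition of $\downCl{\cdot}$: an interval ipomset $R$ lies in $\downCl{\bigcup_i A_i}$ iff $R \sqsubseteq Q$ for some $Q \in \bigcup_i A_i$, which holds iff there is an index $i$ with $R \sqsubseteq Q$ and $Q \in A_i$, i.e.\ iff $R \in \bigcup_i \downCl{A_i}$. Applying this with $A_{(d,e)} = \setDef{P \parallel Q}{P \in M_d,\ Q \in N_e}$ turns the single down-closure above into a union over $D \times E$ of down-closures, and each summand $\downCl{\setDef{P \parallel Q}{P \in M_d,\ Q \in N_e}}$ is by definition $M_d \parallel N_e$. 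Chaining the three equalities gives the claimed identity. I do not anticipate a real obstacle: the lemma is essentially the distributivity of a down-closed binary operation over unions, and the only points needing care are bookkeeping — confirming that the colimit in the thin category $\Lang$ is the union, and keeping the existential rearrangement over $D \times E$ straight.
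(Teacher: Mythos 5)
Your proposal is correct and takes essentially the same route as the paper's proof: both unfold the definition of $\parallel$ on languages and rearrange the existential witnesses ($R \sqsubseteq P \parallel Q$ with $P \in M_d$, $Q \in N_e$) between the two sides. The paper organises this as a two-sided inclusion element-chase, while you package the identical content as an equational chain whose key step is that down-closure commutes with unions — a purely presentational difference.
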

\begin{proof}
Suppose that $L_1=\bigcup_{(d,e)\in D\times E}M_d\parallel N_e$ and $L_2=\left(\bigcup_{d\in D}M_d\right)\parallel\left(\bigcup_{e\in E}N_e\right)$.

Suppose that $R\in L_1$. Then there exist $d\in D$ and $e\in E$ such that $R\in M_d\parallel N_e$. Then there exists a $P\in M_d$ and a $Q\in N_e$ such that $R\sqsubseteq P\parallel Q$. Since $P\in\bigcup_{d\in D}M_d$ and $Q\in\bigcup_{e\in E}N_e$ this means that $P\parallel Q\in L_2$ and therefore $R\in L_2$. This gives us $L_1\subseteq L_2$.

Suppose that $R\in L_2$. Then there exists a $P\in \bigcup_{d\in D}M_d$ and a $Q\in \bigcup_{e\in E}N_e$ such that $R\sqsubseteq P\parallel Q$. Therefore there exist $d\in D$ and $e\in E$ such that $P\in M_d$ and $Q\in N_e$, which means that $P\parallel Q\in M_d\parallel N_e$ and therefore $P\parallel Q\in L_1$. This gives us $R\in L_1$ and therefore $L_1\supseteq L_2$ which means that we have $L_1=L_2$.
\end{proof}

\section{Higher-Dimensional Automata}
\label{sec:hda}

In this section we first recall the definition of HDA, then discuss the monoidal structure
of HDA to model parallel computation and finally show in \cref{sec:compact-hda} that
the category of HDA is locally finitely presented by finite HDA.

\subsection{The Category of HDA}
\label{sec:hda-cat}

Higher-dimensional automata are modelled as labelled precubical sets, which in turn are presheaves
over a category of basic hypercubes.
Such cubes can be represented as ordered sets, where the size of the set corresponds to the
dimension of the cube, and the morphism of the ordered sets determine how the faces of $(n+1)$-cells
in a precubical set match with $n$-dimensional faces.
We fix from now on an alphabet $\Sigma$ in which HDA are labelled.

\begin{definition}
  A labelled linearly ordered set or lo-set $\left(U,\dr,\lambda\right)$ is a finite set $U$ with a
  strict linear order $\dr$ and a labelling map $\lambda \from U \to \Sigma$.
  We write $\emptyLO$ for the unique empty lo-set.
  A lo-map is a map between lo-sets that preserves the order and the labelling.
  Lo-sets and -maps form a category $\lSLO$.
\end{definition}

The category $\lSLO$ is monoidal with $U \cubeTens V$ being the join of $U$ and $V$ considered as thin
categories and the monoidal unit being the empty set.
Explicitly, the underlying set of $U \cubeTens V$ is the coproduct $U + V$, the order is given by
$x \dr_{U \cubeTens V} y$ iff $x \dr_{U} y$, $x \dr_{V} y$, or $x \in U$ and $y \in V$.
The labelling $\lambda_{U \cubeTens V}$ is given by the copairing
$\copair{\lambda_{U}, \lambda_{V}} \from U + V \to \Sigma$.

Note that lo-maps are necessarily injective, which means that morphisms $f \from U \to V$ in
$\lSLO$ are equivalently defined by their image $f(U)$ or their complement $V \setminus f(U)$.
Moreover, $f$ is an isomorphism iff $f$ is surjective, i.e. if $V \setminus f(U) = \emptyset$.
Since isomorphisms in $\lSLO$ are unique, we can safely identify it with a skeleton that
has as objects pairs $(\fOrd{n}, w)$ where $n \in \N$, $\fOrd{n}$ is the finite ordinal
$\set{0 < \dotsm < n - 1}$ with $n$ elements and $w \in \Sigma^n$ is a word of length $n$.

\begin{definition} \label{coface-def}
  A \emph{coface map} $d \from U \to V$ between lo-sets $U$ and $V$ is a triple $(f, A, B)$,
  where $f \from U \to V$ is a lo-map and $\set{A,B}$ is a partition of the complement image of $f$,
  that is, $V \setminus f(U) = A \cup B$ and $A \cap B = \emptyset$.
  We write $d(x)$ for the application of the underlying map $f$ to $x$ to simplify notation.
  For $A,B \subseteq U$ that are disjoint, we denote by $d_{A,B} \from U \setminus (A \cup B) \to U$
  the coface map $(i, A, B)$, where $i \from U \setminus (A \cup B) \to U$ is the inclusion.
\end{definition}

The monoidal structure on $\lSLO$ carries over to a monoidal structure on the category of lo-sets and coface
maps, which is the \emph{full precube category} $\FCube$.

\begin{lemma}
  \label{lem:precube-category}
  The lo-sets and coface maps form a monoidal category $(\FCube, \cubeTens, \cubeUnit)$.
\end{lemma}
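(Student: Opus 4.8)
# Proof Proposal for Lemma (lem:precube-category)

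The plan is to establish that $(\FCube, \cubeTens, \cubeUnit)$ is a monoidal category by first verifying that $\FCube$ is a category (composition of coface maps and identities), and then transporting the monoidal structure from $\lSLO$ along the inclusion of lo-maps into coface maps. I would take the monoidal unit $\cubeUnit$ to be the empty lo-set $\emptyLO$, matching the unit of $\lSLO$.

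First I would define composition of coface maps. Given $d = (f, A, B) \from U \to V$ and $d' = (g, C, D) \from V \to W$, the underlying lo-map of the composite is $g \comp f \from U \to W$, which is again a lo-map since lo-maps compose. The subtle point is how to partition the complement image $W \setminus (g \comp f)(U)$ into the two parts of the new coface map. Since $W \setminus (g\comp f)(U) = g(V \setminus f(U)) \sqcup (W \setminus g(V)) = g(A) \sqcup g(B) \sqcup C \sqcup D$, the natural definition is $d' \comp d = (g \comp f, \; g(A) \cup C, \; g(B) \cup D)$. I would check that these two sets are indeed disjoint (using that $g$ is injective, that $A \cap B = \emptyset$, and that $C \cap D = \emptyset$ with $g(V)$ disjoint from $C \cup D$) and that their union is exactly the complement image. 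The identity coface map on $U$ is $(\id_U, \emptyset, \emptyset)$. Associativity of composition and the unit laws then follow from associativity and unit laws in $\lSLO$ together with the fact that the assignment of the two partition-parts is functorial in the obvious bookkeeping sense; I would verify associativity by a direct computation on both the underlying maps and the two accumulated partition sets.

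Next I would equip $\FCube$ with the tensor product on objects inherited from $\lSLO$, namely $U \cubeTens V = U \sloTens V$ on underlying lo-sets. On coface maps, given $d = (f, A, B) \from U \to U'$ and $e = (g, C, D) \from V \to V'$, I would define $d \cubeTens e$ to have underlying lo-map $f \sloTens g \from U \sloTens V \to U' \sloTens V'$ (the monoidal product in $\lSLO$), with the two partition sets obtained by taking the images of $A, C$ under the coproduct injections into one part and the images of $B, D$ into the other. Concretely, since the complement image of $f \sloTens g$ is $i_{U'}(U' \setminus f(U)) \sqcup i_{V'}(V' \setminus g(V))$, I would set the first part to $i_{U'}(A) \cup i_{V'}(C)$ and the second to $i_{U'}(B) \cup i_{V'}(D)$. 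I would then check that $\cubeTens$ is functorial (it preserves identities and composition), which reduces to the corresponding facts for $\sloTens$ in $\lSLO$ and the compatibility of coproduct injections with composition.

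Finally, the associator, left unitor, and right unitor of $\FCube$ would be taken to be the images, under the inclusion $\lSLO \hookrightarrow \FCube$ sending each lo-map $h$ to the coface map $(h, \emptyset, \emptyset)$, of the corresponding coherence isomorphisms of $\lSLO$. These are isomorphisms in $\FCube$ because the inclusion sends isomorphisms to isomorphisms (a bijective lo-map with empty complement yields an invertible coface map). Naturality of these isomorphisms with respect to arbitrary coface maps, and the triangle and pentagon coherence identities, then follow from the corresponding properties in $\lSLO$, because the inclusion is a strict monoidal functor on the structure maps and the partition data behaves additively under the constructions above. The main obstacle I anticipate is bookkeeping correctness for the partition sets under composition and tensoring: one must confirm at each stage that the two designated subsets remain disjoint and jointly exhaust the complement image, and that these assignments are strictly compatible (not merely up to reshuffling) so that associativity and the coherence diagrams hold on the nose rather than only up to the already-established isomorphisms. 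This is where I would spend the most care, as the order structure from $\sloTens$ (where all of $U$ precedes all of $V$) interacts with which elements land in the $A$-part versus the $B$-part of each coface map.
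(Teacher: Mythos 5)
Your proposal is correct and follows essentially the same route as the paper: the same composition formula $(g \comp f, g(A) \cup C, g(B) \cup D)$, the identity $(\id, \emptyset, \emptyset)$, the tensor inherited from $\lSLO$ with partition sets combined via the coproduct injections, and coherence isomorphisms taken as lo-isomorphisms with trivially partitioned (empty) complement image. The only cosmetic difference is that you verify associativity by direct computation on the accumulated partition sets, where the paper appeals to preservation properties of images; both amount to the same bookkeeping.
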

\begin{proof}
  Composition of $(e, C, D) \from V \to W$ and $(d, A, B) \from U \to V$ is given by
  $(e, C, D) \comp (d, A, B) = (e \comp d, e(A) \cup C, e(B) \cup D)$.
  That $\set{e(A) \cup C, e(B) \cup D}$ form a partition of the complement image of $e \comp d$
  follows from injectivity of $e$, properties of the image and the given partitions.
  The identity is given by $(\id, \emptyset, \emptyset)$, and the unit and associativity axioms
  follow from colimit preservation of the image.
  The monoidal structure in inherited from $\lSLO$: on objects we use $\sloTens$ and on morphisms
  we take
  $(d_{1}, A_{1}, B_{1}) \cubeTens (d_{2}, A_{2}, B_{2}) = (d_{1} \sloTens d_{2}, A_{1} \sloTens A_{2}, B_{1} \sloTens B_{2})$,
  where we write $A_{1} \sloTens A_{2}$ for the application of $\sloTens$ to the inclusions
  $A_{k} \subseteq V$.
  Finally, the associator and unitor isomorphisms have empty complement image that can be
  trivially partitioned.
\end{proof}

Since isomorphisms in $\lSLO$ are unique, they are in $\FCube$ as well and we can use the same
skeleton as for $\lSLO$ only with the morphisms of $\FCube$.
We denote this small skeleton by $\Cube$.

\begin{definition}
  A \emph{precubical set} is a presheaf (functor) $X \from \op{\square} \to \SetC$ and a \emph{precubical map} is a natural transformation.
  They form a category $\presheaf{\square}$.
  We write $\Yo$ for the Yoneda embedding $\Cube \to \presheaf{\Cube}$ with
  $\Yo_{U} = \Cube(-, U)$.
\end{definition}

We refer to the elements of $X[U]$ as \emph{cells} and to the cardinality of $U$ as
the \emph{dimension} of those cells.
If for some $U$ of cardinality $n$ the set $X[U]$ is inhabited and for all $V$ with
cardinality greater than $n$ the sets $X[U]$ are empty, then we say that $X$ has finite
dimension $n$.
A precubical set $X$ is finite if it has finite dimension and if for all $U \in \Cube$ the
set $X[U]$ is finite.

To ease notation, we denote
the face map $X[d_{A,B}] \from X[U] \to X\brak*{U \setminus \parens*{A\cup B}}$, induced by a coface map
$d_{A,B} \from U \setminus \parens*{A\cup B} \to U$, by $\delta_{A,B}$.
The face maps $\delta_{A,\emptyset}$ and $\delta_{\emptyset,B}$ will be suggestively abbreviated
to $\delta^0_A$ and $\delta^1_B$.

\begin{definition}
  \label{def:hda}
  A \emph{higher-dimensional automaton (HDA)} is a tuple $\HDATup{X}$ where $X$ is a
  precubical set, $\sCells{X}$ and $\aCells{X}$ are families of sets indexed by lo-sets
  of starting and accepting cells with $(\sCells{X})_{U} \subseteq X_{U}$
  and $(\aCells{X})_{U} \subseteq X_{U}$.
  A HDA map $f \from \HDATup{X} \to \HDATup{Y}$ is a precubical map $f \from X \to Y$ that
  preserves the starting and accepting cells, that is,
  $f\parens*{\sCells{X}} \subseteq \sCells{Y}$ and $f\parens*{\aCells{X}}\subseteq \aCells{Y}$.
  We denote by $\HDA$ the category of higher-dimensional automata and their maps.
\end{definition}

We usually leave out the index on $\sCells{X}$ and $\aCells{X}$ for better readability.

\begin{lemma}
  \label{lem:hda-precubical-adjunction}
  The forgetful functor $\forget:\HDA \to \presheaf{\Cube}$ has left and right
  adjoints $N$ and $T$ given, respectively, by
  $NX = (X, \emptyset, \emptyset)$ and $TX = (X, \obj{X}, \obj{X})$
  where $\obj{X}$ is the family obtained by forgetting the action of $X$ on morphisms.
  Thus, the left adjoint $N$ stipulates no starting or accepting cells, while $T$ considers all
  cells as starting and accepting.
\end{lemma}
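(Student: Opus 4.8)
The plan is to verify the two adjunctions $N \dashv \forget$ and $\forget \dashv T$ directly by exhibiting natural bijections on hom-sets, rather than appealing to an adjoint functor theorem. Since $\forget$ simply discards the starting and accepting cells, the whole argument reduces to the observation that the only extra data carried by a HDA map — namely the two containment conditions $f(\sCells{X}) \subseteq \sCells{Y}$ and $f(\aCells{X}) \subseteq \aCells{Y}$ — are automatically satisfied on either side when the relevant cell-sets are extremal (empty or everything).

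First I would treat the left adjoint $N$. Fix a precubical set $A$ and a HDA $\HDATup{Y}$. A HDA map $g \from NA \to \HDATup{Y}$ is by \cref{def:hda} a precubical map $g \from A \to Y$ subject to $g(\emptyset) \subseteq \sCells{Y}$ and $g(\emptyset) \subseteq \aCells{Y}$; both conditions hold vacuously. Hence the assignment $g \mapsto \forget g = g$ is a bijection
\[
  \HDA(NA, \HDATup{Y}) \;\cong\; \presheaf{\Cube}(A, \forget\HDATup{Y}),
\]
which is manifestly natural in both arguments because it is the identity on underlying precubical maps. This exhibits $N$ as left adjoint to $\forget$.

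Dually, for the right adjoint $T$, fix a HDA $\HDATup{X}$ and a precubical set $B$. A HDA map $h \from \HDATup{X} \to TB$ is a precubical map $h \from X \to B$ satisfying $h(\sCells{X}) \subseteq B$ and $h(\aCells{X}) \subseteq B$; both are trivially true since $TB$ takes \emph{all} cells as starting and accepting. So again $h \mapsto \forget h = h$ gives a natural bijection
\[
  \HDA(\HDATup{X}, TB) \;\cong\; \presheaf{\Cube}(\forget\HDATup{X}, B),
\]
establishing $\forget \dashv T$. I expect no genuine obstacle here: the content is entirely bookkeeping, and the single point worth stating explicitly is that $N$ and $T$ are well-defined on morphisms, i.e.\ that $Nf = f$ and $Tf = f$ preserve the (trivial) interface data — for $N$ because $f(\emptyset) = \emptyset$, and for $T$ because the target interfaces are the full cell-sets. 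Naturality of both bijections then follows at once from the fact that each is the identity on underlying maps.
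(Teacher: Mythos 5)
Your proposal is correct. The paper in fact states this lemma without any proof, treating it as immediate from \cref{def:hda}, and your direct verification of the two hom-set bijections — observing that the interface conditions $f(\sCells{X}) \subseteq \sCells{Y}$ and $f(\aCells{X}) \subseteq \aCells{Y}$ become vacuous when the relevant cell-sets are $\emptyset$ (for $N$) or all cells (for $T$), and that both bijections are identities on underlying maps, hence natural — is precisely the routine argument the paper is taking for granted.
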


\subsection{Monoidal Structure on HDA}
\label{sec:hda-monoidal}

Our main interest in this paper is to realise (repeated) parallel composition of languages as HDA.
In this section we briefly discuss how HDA can be synchronised in parallel via a monoidal
product on $\HDA$.
\begin{definition}
  \label{def:hda-tensor}
  The tensor product of HDA is the
  Day convolution~\cite{Day70:ConstructionBiclosedCategories,IK86:UniversalPropertyConvolution,%
    Loregian20:CoendCalculus}, which is given
  for HDA $X$ and $Y$ on the precubical sets by the following coend.
  \begin{equation*}
    X \otimes Y = \int^{V,W} \Cube(-, V \cubeTens W) \times X[V] \times Y[W]
  \end{equation*}
  The starting cells $\left(X\otimes Y\right)_{\bot}$ are given as the image
  of all inclusions
  \begin{equation*}
    \begin{tikzcd}[column sep=small]
      (X_{\bot} \cap X[V]) \times (Y_{\bot} \cap Y[W])
      \ar[r]
      & \Cube(V \cubeTens W,V \cubeTens W) \times X[V] \times Y[W]
      \ar[r]
      & X \otimes Y
    \end{tikzcd}
  \end{equation*}
  and analogously for accepting cells $\left(X\otimes Y\right)^{\top}$.
  A diagram chase shows that $\otimes$ is well-defined on HDA morphisms.
  For any $U \in \Cube$, we can make $\Yo_{U}$ an HDA by taking all cells to be starting and accepting.
  The monoidal unit $I$ is the Yoneda embedding $\Yo_{\varepsilon}$ of the empty lo-set with the
  only $0$-cell being starting and accepting.
\end{definition}
By this definition, the Yoneda embedding is a strong monoidal
functor and $\otimes$ preserves colimits~\cite{IK86:UniversalPropertyConvolution}.
Moreover, $\forget$ is clearly a strict monoidal functor.
Usually, the tensor product of (pre)cubical sets is defined as a
coproduct~\cite{BH87:TensorProductsHomotopies,FJS+22:KleeneTheoremHigherDimensional,%
  Grandis09:DirectedAlgebraicTopology,Kan55:AbstractHomotopy}.
Indeed, we present in appendix~\ref{sec:convolution} a proof for the isomorphism $(X \otimes Y)(U) \cong \coprod_{U = V \cubeTens W} X[V] \times Y[W]$.

\subsection{Filtered Colimits and Compact HDA}
\label{sec:compact-hda}

Compact (or finitely presentable) objects in a category can be thought of as the analogue of finite sets, relative
to what morphisms in that category perceive as finite.
For instance, compact objects in the category $\VecC_{\R}$ of $\R$-vector spaces are vector spaces with
finite dimension.
In $\SetC$ and $\VecC_{\R}$, arguments can be reduced to arguments about compact objects because
\emph{all} objects in those categories are given as nice colimits of a set of chosen compact
objects.
For instance, each set $U$ is given as a colimit of finite sets by taking the union of the finite subsets of $U$.
This process is given by so-called filtered colimits.
The advantage of breaking down objects to filtered colimits of compact objects is that constructions
on objects can be carried out on a set of compact objects instead.
Categories that admit these kind of reduction are called locally finitely presentable (lfp).

In what follows, we recall the definition of lfp categories~\cite{adamek_rosicky_1994,Riehl16:CatTheoryinContext}, show that the category of
HDA is lfp and that the compact objects are precisely the finite HDA.
A category $\Cat{C}$ is called \emph{essentially small} if it is equivalent to a small category.
We call a category $D$ \emph{filtered} if any finite diagram in $D$
has a cocone, or equivalently if (1) $D$ is inhabited, (2) for any two objects $c,d \in D$ there exists an object $e\in D$ and two 
morphisms $c \rightarrow e \leftarrow d$, and (3) for any two morphisms $f,g \from c \to d$ there exist an object $e\in D$ and a morphism $h \from d \to e$ with $h \comp f = h \comp g$.
A \emph{filtered colimit} in a category $\Cat{C}$ is a colimit of a diagram
$F \from D \to \Cat{C}$ where $D$ is filtered.
We say that an object $X \in \Cat{C}$ is \emph{compact} if the hom-functor
$\Cat{C}(X,-) \from \Cat{C} \to \SetC$ preserves filtered colimits.
Finally, the category $\Cat{C}$ is called \emph{locally finitely presentable (lfp)} if it is
cocomplete, the subcategory $\Cat{C}_{\mathrm{c}}$ of compact objects is essentially small,
and every object in $\Cat{C}$ is isomorphic to a filtered colimit of compact objects.
Many calculations are simplified by the fact that the category $\Cat{C}_{\mathrm{c}}$ is closed
under finite colimits~\cite[Prop.~1.3]{adamek_rosicky_1994}.
One of the important examples of a lfp category is the functor category of precubical sets
$\presheaf{\square}$~\cite[Ex.~1.12]{adamek_rosicky_1994} and that the hom-functor $\Yo_{U}$ is
compact in $\presheaf{\square}$ for all $U \in \Cube$.

In what follows, we show that $\HDA$ is equivalent to a reflective subcategory $\Cat{H}$ of a
presheaf category that is closed under filtered colimits.
This implies that $\Cat{H}$ and $\HDA$ are lfp~\cite[Sec. 1C]{adamek_rosicky_1994}.

We define a category $\lSLOc$ with objects
\begin{equation*}
  \obj{\lSLOc} = \obj{\lSLO} \cup \set{\top, \bot} \times \obj{\lSLO}
\end{equation*}
and morphisms between objects are given as follows.
\begin{equation*}
  \lSLOc(X, Y) =
  \begin{cases}
    \lSLO(X, Y), & X, Y \in \obj{\lSLO} \\
    \set{\ast_{\top}}, & X \in \obj{\lSLO}, Y = (\top, X) \\
    \set{\ast_{\bot}}, & X \in \obj{\lSLO}, Y = (\bot, X) \\
    \emptyset, & \text{otherwise}
  \end{cases}
\end{equation*}
We will write $U_{\top}$ and $U_{\bot}$ instead of $(\top, U)$ and $(\bot, U)$ for $U \in \lSLO$.
Let $\Cat{H}$ be the full subcategory of $\presheaf{\lSLOc}$ of presheaves $P$ for which $P(\ast_{\top})$ and $P(\ast_{\bot})$ are injective.
The idea is that $P(U^{\top})$ and $P(U_{\bot})$ contain the starting and accepting cells of dimension $U$.
We now have to show that $\Cat{H}$ is a reflective subcategory of $\presheaf{\lSLOc}$, closed under filtered colimits and equivalent to $\HDA$.

\begin{lemma}
  \label{lem:h-cat-reflective}
  $\Cat{H}$ is a reflective subcategory of $\presheaf{\lSLOc}$.
\end{lemma}
\begin{proof}
Let $I \from \Cat{H} \to \presheaf{\lSLOc}$ be the inclusion functor.
We construct a left-adjoint $T$ to $I$ using the orthogonal epi-mono factorisation system $(E, M)$  on $\SetC$ as follows.
For a presheaf $P \from \op{(\lSLOc)} \to \SetC$ and $U, V \in \lSLO$, we define a presheaf $TP$ by
$(TP)(U) = P(U)$, $(TP)(k) = Pk$ for $k \from V \to U$ and $(TP)(U_{\top})$ and $(TP)(U_{\bot})$ by the following factorisations into a surjection followed by an injection.
\begin{align*}
  & P(U_{\bot}) \xrightarrow{\eta_{P,U_{\bot}}} (TP)(U_{\bot}) \xrightarrow{(TP)(\ast_{\bot})} P(U)
  \\
  & P(U_{\top}) \xrightarrow{\eta_{P,U_{\top}}} (TP)(U_{\top}) \xrightarrow{(TP)(\ast_{\top})} P(U)
\end{align*}
Since the epi-mono factorisation system is functorial, this assignment defines a functor
$T \from \presheaf{\lSLOc} \to \Cat{H}$.
If we put $\eta_{P,U} = \id_{P(U)}$ for $U \in \lSLO$, then this yields together with the above factorisation a natural transformation $\eta \from \Id \to IT$.
Let now $f \from P \to K$ be map natural transformation with $K \in \Cat{H}$.
Since $(E,M)$ is an orthogonal factorisation system, there is for all $U$ a unique map $\bar{f}_{U_{\bot}}$ filling the following diagram.
\begin{equation*}
\begin{tikzcd}[column sep=large]
	{P(U_{\bot})} & {(TP)(U_{\bot})} & {P(U)} \\
	{K(U_{\bot})} & {K(U_{\bot})} & {K(U)}
	\arrow["{\eta_{P, U_{\bot}}}", two heads, from=1-1, to=1-2]
	\arrow["{T(\ast_{\bot})}", tail, from=1-2, to=1-3]
	\arrow["{f_{U_{\bot}}}"', from=1-1, to=2-1]
	\arrow["\id"', from=2-1, to=2-2]
	\arrow["{K(\ast_{\bot})}"', from=2-2, to=2-3]
	\arrow["{f_{U}}", from=1-3, to=2-3]
	\arrow["{\bar{f}_{U_{\bot}}}"', dashed, from=1-2, to=2-2]
\end{tikzcd}
\end{equation*}
Similarly, there is a unique map $\bar{f}_{U_{\top}} \from (TP)(U_{\top}) \to K(U_{\top})$ with $\bar{f}_{U_{\top}} \comp \eta_{P, U_{\top}} = f_{U_{\top}}$.
If we put $\bar{f}_{U} = f_{U}$, then we obtain a unique natural transformation $\bar{f} \from TP \to K$ with $\bar{f} \comp \eta_{P} = f$.
Thus, $(TP, \eta)$ is a reflection of $P$ along $I$ and thus $T \dashv I$.
The inclusion $I$ is by definition full and thus $\Cat{H}$ is a reflective subcategory of the presheaf category $\presheaf{\lSLOc}$.
\end{proof}

\begin{lemma}
  \label{lem:h-cat-filtered-colim}
  $\Cat{H}$ is closed under filtered colimits.
\end{lemma}
\begin{proof}
Let $\Cat{C}$ be filtered and $D \from \Cat{C} \to \Cat{H}$ a diagram.
Since $\presheaf{\lSLOc}$ is a presheaf category, the colimit $\colim ID$ is computed point-wise.
Thus, it remains to prove that $(\colim ID)(\ast_{\top})$ and $(\colim ID)(\ast_{\bot})$ are injective, where we only prove the first and the second is analogous.
We note that the following diagram is a pullback for all $c \in \Cat{C}$ and $U \in \lSLO$ because $D_c(\ast_{\top})$ is a monomorphism (injective).
\begin{equation*}
\begin{tikzcd}[column sep=large]
	{D_c(U_{\top})} & {D_c(U_{\top})} \\
	{D_c(U_{\top})} & {D_c(U)}
	\arrow["{D_c(\ast_{\top})}"', from=2-1, to=2-2]
	\arrow["{D_c(\ast_{\top})}", from=1-2, to=2-2]
	\arrow["\id"', from=1-1, to=2-1]
	\arrow["\id", from=1-1, to=1-2]
\end{tikzcd}
\end{equation*}
Since filtered colimits commute with finite limits and because $\colim$ preserves identities, the following is also a pullback.
\begin{equation*}
\begin{tikzcd}[column sep=large]
	{(\colim D)(U_{\top})} & {(\colim D)(U_{\top})} \\
	{(\colim D)(U_{\top})} & {D_c(U)}
	\arrow["{(\colim D)(\ast_{\top})}"', from=2-1, to=2-2]
	\arrow["{(\colim D)(\ast_{\top})}", from=1-2, to=2-2]
	\arrow["\id"', from=1-1, to=2-1]
	\arrow["\id", from=1-1, to=1-2]
	\arrow["\lrcorner"{anchor=center, pos=0.125}, draw=none, from=1-1, to=2-2]
\end{tikzcd}
\end{equation*}
Therefore, $(\colim D)(\ast_{\top})$ is a monomorphism and $\colim D \in \Cat{H}$.
\end{proof}

\begin{lemma}
  \label{lem:hda-is-presheaf-cat}
  $\Cat{H}$ is equivalent to $\HDA$.
\end{lemma}
\begin{proof}
  This is obvious by mapping a presheaf $P \in \Cat{H}$ to the HDA $(X, X_{\bot}, X_{\top})$ with $X(U) = P(U)$, $X_{\bot} = \bigcup_{U} P(\ast_{\bot})(U_{\bot})$ and $X_{\top} = \bigcup_{U} P(\ast_{\top})(U_{\top})$.
  This mapping induces clearly a fully faithful functor that is essentially surjective, and is thus part of an equivalence.
\end{proof}

Combining \cref{lem:h-cat-reflective,lem:h-cat-filtered-colim,lem:hda-is-presheaf-cat} gives us immediately that $\HDA$ is lfp~\cite[Sec. 1C]{adamek_rosicky_1994}.
\begin{corollary}
  \label{hda-lfp}
  The category of HDA is locally finitely presentable.
\end{corollary}

We use in what follows sometimes the following more specific description of finite presentations in $\HDA$.
Let $I \from \HDAcomp \to \HDA$ be the inclusion functor of the full subcategory
of compact HDA in $\HDA$.
For a HDA $X$, we denote by $\Comma{I}{X}$ the comma category that has as objects
morphisms $Y \to X$ from a compact HDA $Y$ into $X$, and morphisms are the evident commutative
triangles.
The comma category $\Comma{I}{X}$ is essentially small and closed under finite colimits, thus it is a
filtered category.
We write $U_{X} \from \Comma{I}{X} \to \HDAcomp$ for the domain projection functor.
\begin{corollary} \label{pcs-filtered-colimit-finite-pcs}
  Every HDA $X$ can be canonically expressed as the filtered colimit of finite HDA, that is,
  we have $X \cong \colim U_{X}$.
\end{corollary}

For instance, it follows that compact HDA are the expected finite HDA.
\begin{theorem}
  \label{thm:hda-compact-iff-finite}
  A HDA is compact if and only if it is finite.
\end{theorem}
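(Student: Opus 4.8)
The plan is to prove both implications by transporting statements about HDA to their underlying precubical sets, exploiting that $\forget$ creates colimits (\cref{HDA-cocomplete}) and that $\presheaf{\square}$ is lfp with the representables $\Yo_U$ compact. The enabling observation is a precise description of how a colimit of HDA carries its interfaces: since $\forget$ creates colimits, for a diagram $(Y_i)$ with colimit $Y$ and injections $\phi_i \from Y_i \to Y$, the underlying precubical set is the pointwise colimit and $\sCells{Y} = \bigcup_i \phi_i(\sCells{Y_i})$, and likewise for $\aCells{Y}$.

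First I would exhibit compact generators. For $U \in \Cube$ let $\iota_U \in \Yo_U[U]$ be the top cell and set $N\Yo_U = (\Yo_U,\emptyset,\emptyset)$, $S_U = (\Yo_U, \set{\iota_U}, \emptyset)$ and $A_U = (\Yo_U, \emptyset, \set{\iota_U})$. By Yoneda, an HDA map $S_U \to Y$ is a cell $y \in Y[U]$ with $y \in \sCells{Y}$, so $\HDA(S_U, Y) \cong \sCells{Y}\cap Y[U]$, and symmetrically $\HDA(A_U,Y) \cong \aCells{Y}\cap Y[U]$ and $\HDA(N\Yo_U, Y) \cong Y[U]$. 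Combining the description of $\sCells{Y}$ above with $Y[U] = \colim_i Y_i[U]$ in $\SetC$, a short filteredness argument gives $\sCells{Y}\cap Y[U] \cong \colim_i(\sCells{Y_i}\cap Y_i[U])$: surjectivity is immediate, and injectivity uses that the transition maps are HDA maps and hence send starting cells to starting cells, so two representatives that become equal already agree \emph{as starting cells} further along the diagram. Thus $S_U$, $A_U$ and $N\Yo_U$ are all compact.

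For finite $\Rightarrow$ compact, I would write a finite HDA $X$ as a finite colimit of these generators. The underlying precubical set is a finite colimit of representables over the finite category of elements $\int X$, so $NX$ is a finite colimit of the compact objects $N\Yo_U$ and is compact. Each starting cell $x\in\sCells{X}\cap X[U]$, and each accepting cell, is then adjoined by a pushout of $N\Yo_U \hookrightarrow S_U$ (respectively $N\Yo_U \hookrightarrow A_U$) along the map classifying that cell in the HDA built so far; since the right-hand leg is an isomorphism on underlying precubical sets, such a pushout leaves the precubical set unchanged and merely adds $x$ to the starting (respectively accepting) cells. Performing these finitely many pushouts turns $NX$ into $X$, and as each step is a finite colimit of compact objects and $\HDA_{\mathrm c}$ is closed under finite colimits~\cite[Prop.~1.3]{adamek_rosicky_1994}, $X$ is compact. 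For compact $\Rightarrow$ finite, I would realise an arbitrary $X$ as the filtered colimit of its finite sub-HDA ordered by inclusion: a finite set of cells closed under faces is finite, so these form a directed poset whose colimit, computed through $\forget$, is exactly $X$. Compactness of $X$ gives $\HDA(X,X)\cong\colim_i \HDA(X,X_i)$, so $\mathrm{id}_X$ factors as $X \xrightarrow{h} X_i \hookrightarrow X$ through some finite sub-HDA $X_i$, exhibiting $X$ as a retract of $X_i$. Each component $h_U \from X[U] \to X_i[U]$ is then a split monomorphism, so $X[U]$ injects into the finite set $X_i[U]$ and is empty whenever $X_i[U]$ is; as $X_i$ has only finitely many cells, so does $X$, i.e. $X$ is finite.

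The main obstacle is the bookkeeping of the interfaces under colimits, concentrated in two places: verifying that $S_U$ and $A_U$ really are compact (the injectivity half of $\sCells{Y}\cap Y[U]\cong\colim_i(\sCells{Y_i}\cap Y_i[U])$, where filteredness is essential), and checking that the gluing pushouts adjoin a single interface cell without perturbing the underlying precubical set. Once colimits of HDA are known to act pointwise on cells and by union-of-images on interfaces — which is precisely what \cref{HDA-cocomplete} supplies — both directions follow the familiar lfp template.
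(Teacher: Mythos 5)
Your proof is correct, but it reaches the theorem by a genuinely different route than the paper, most visibly in the finite\(\Rightarrow\)compact direction. The paper's \cref{finite-implies-compact} verifies compactness of a finite HDA \(X\) by hand: any map from \(X\) into a filtered colimit factors through a stage (\cref{HDA-diagram-image}, which rests on lifting finite face-closed sets of cells, \cref{subset-diagram-bijection}), and such factorisations are essentially unique (\cref{finite-sim-maps}); together these show \(\HDA(X,-)\) preserves filtered colimits. You instead isolate compact generators \(N\Yo_U\), \(S_U\), \(A_U\), prove their compactness by computing hom-sets --- where the only real work is the interface bookkeeping \(\sCells{Y}\cap Y[U] \cong \colim_i\parens*{\sCells{Y_i}\cap Y_i[U]}\), and your filteredness argument for injectivity (representatives become equal \emph{as starting cells} further along, since transition maps preserve interfaces) is exactly right --- then assemble a finite HDA as a finite colimit of generators: density over the finite category of elements for \(NX\), plus interface-adjoining pushouts along \(N\Yo_U \hookrightarrow S_U\) and \(N\Yo_U \hookrightarrow A_U\), which indeed fix the underlying precubical set because one leg is an isomorphism there; closure of compacts under finite colimits \cite[Prop.~1.3]{adamek_rosicky_1994} finishes. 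For compact\(\Rightarrow\)finite, both you and the paper (\cref{compact-implies-finite}) run the same retract argument, but over different filtered diagrams: the paper invokes \cref{pcs-filtered-colimit-finite-pcs}, whose proof goes through the category of elements and its finite full subcategories, whereas you use the directed poset of finite sub-HDA, which you correctly justify via the pointwise/union description of colimits supplied by \cref{HDA-cocomplete}. As for what each buys: the paper's hands-on lemmas are amortised by reuse elsewhere (e.g.\ \cref{subset-diagram-bijection} and \cref{surjective-injections} reappear in the language-preservation results such as \cref{hda-lang-filtered-colim}), while your argument is more modular --- it confines all interface subtleties to the compactness of \(S_U\) and \(A_U\), rides on general lfp theory for the rest, and your sub-HDA diagram is a leaner, self-contained substitute for \cref{pcs-filtered-colimit-finite-pcs} that avoids any appearance of circularity between that theorem and the identification of compact with finite HDA.
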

\begin{proof}

Let $X$ be a compact HDA and let $F:D\to\HDA$ be a filtered diagram of finite HDA with the filtered colimit $(X,\phi)$ as per \cref{pcs-filtered-colimit-finite-pcs}. Then, since $X$ is compact, we have
\[
\colim_{d\in D}\text{Hom}\left(X,F(d)\right)\cong\text{Hom}\left(X,\colim_{d\in D}F(d)\right)\cong\text{Hom}\left(X,X\right)
\]
As a consequence, we get that the identity map $\text{id}_X$ factors through a map $X\rightarrow F(d)$. Since $F(d)$ is a finite HDA, $X$ has to be finite as well.
\end{proof}

\section{Languages of Higher-Dimensional Automata}
\label{sec:hda-lang}
Computations as modelled by HDA can be expressed as higher-dimensional paths running through the HDA from a starting cell to an accepting cell. Each of these accepting paths corresponds to an interval ipomset, which allows us to define the languages of HDA as the set of interval ipomsets it accepts.
We expand here on previous work~\cite{FJS+22:KleeneTheoremHigherDimensional} by removing the restriction to finite HDA and by
showing that HDA languages preserve coproducts and filtered colimits.

\subsection{Paths and languages}
\label{sec:paths}

Let us start by defining paths and their labelling.
\begin{definition} \label{def-paths}
A path (of length $n$) in a precubical set or HDA $X$ is a list
\[
\alpha=\left(x_0,\varphi_1,x_1,\varphi_2,...,\varphi_n,x_n\right)
\]
where $x_k\in X\left[U_k\right]$ are cells for $U_k \in \Box$ and for all $1\leq k\leq n$ we have an
\begin{itemize}
    \item up-step: $\varphi_k=d_{A,\emptyset}\in\Cube\left(U_{k-1},U_k\right)$, $x_{k-1}=\delta^0_A\left(x_k\right)$ and $A=U_k\backslash U_{k-1}$, or
    \item down-step: $\varphi_k=d_{\emptyset,B}\in\Cube\left(U_k,U_{k-1}\right)$, $\delta^1_B\left(x_{k-1}\right)=x_k$ and $B=U_{k-1}\backslash U_k$.
\end{itemize}
\end{definition}
The elements $x_k$ are cells, while the $\varphi_k$ express how these cells are connected. Since for a path we cannot have $\delta^0_A\left(x_{k-1}\right)=x_k$ or $x_{k-1}=\delta^1_B\left(x_k\right)$ it can only move along the direction of the arrows. Two paths where the first ends at the starting cell of the other can be composed as follows.
\begin{definition}
Let $\alpha=\left(x_0,\varphi_1,x_1,...,\varphi_n,x_n\right)$ and $\beta=\left(y_0,\psi_1,y_1,...,\psi_m,y_m\right)$ be two paths in a precubical set or HDA $X$ with $x_n=y_0$. Then we define their concatenation $\alpha*\beta$ as the following path in $X$.
\[
\alpha*\beta=\left(x_0,\varphi_1,x_1,...,\varphi_n,x_n,\psi_1,y_1,...,\psi_m,y_m\right)
\]
\end{definition}
Every path $\alpha=\left(x_0,\varphi_1,x_1,...,\varphi_n,x_n\right)$ can therefore be broken down into paths of length 1, called steps. We denote a step $\left(x_{k-1},\varphi_k,x_k\right)$ with $x_{k-1}\nearrow^Ax_k$ if $\varphi_k=d_{A,\emptyset}$ (an \textit{up step}) or with $x_{k-1}\searrow_Bx_k$ if $\varphi_k=d_{\emptyset,B}$ (a \textit{down step}). We get the unique representation $\left(x_0,\varphi_1,x_1\right)*\left(x_1,\varphi_2,x_2\right)*...*\left(x_{n-1},\varphi_n,x_n\right)$ for the path $\alpha$. Using this we define the labelling of paths recursively.
\begin{definition}
Let $X$ be a precubical set or HDA. Let $\alpha$ be a path in $X$, let $U$ and $V$ be objects in $\Cube$ and let $x\in X[U]$, $y\in X[V]$. Then the labelling $\ev\left(\alpha\right)$ of $\alpha$ is the ipomset that is computed as follows:
\begin{itemize}
\item If $\alpha=\left(x\right)$ is a path of length 0 then its label is
  $\ev\left(\alpha\right)=\left(U,\emptyset,\dr_U,U,U,\lambda_U\right)$.
    \item If $\alpha=\left(x,\varphi, y\right)$ is a path with $x\nearrow^Ay$ then its label is 
    \[
    \ev\left(\alpha\right)=\left(V,\emptyset,\dr_V,V\backslash A,V,\lambda_V\right)
    \]
    \item If $\alpha=\left(x,\varphi, y\right)$ is a path with $x\searrow_By$ then its label is 
    \[
    \ev\left(\alpha\right)=\left(U,\emptyset,\dr_U,U,U\backslash B,\lambda_U\right)
    \]
    \item If $\alpha=\beta_1*\beta_2*...*\beta_n$ the concatenation of steps $\beta_1,\beta_2,...,\beta_n$ then its label is the gluing composition of ipomsets $\ev\left(\alpha\right)=\ev\left(\beta_1\right)*\ev\left(\beta_2\right)*...*\ev\left(\beta_n\right)$.
\end{itemize}
\end{definition}
The labels of paths of length 0 or 1 are trivially interval ipomsets. 
Since the labelling of paths of length greater than 1 is defined as the gluing of the labels of its steps it follows that they are interval ipomsets as well.

\begin{example}
  Let us present the dashed paths in \cref{fig:transitions-hda} as a path $\alpha$ as follows.
  We will refer to the shaded $2$-cells as $x$ and $y$, respectively.
  The bottom-left corner of $x$ is $s$ and the top-right corner of $y$ is $e$.
  Moreover, we enumerate the $1$-cells labelled with $a$ and $c$ from left to right as
  $a^{k}$ and $c^{k}$ with $k \in \set{1,2,3}$.
  The path
  $\alpha = s \nearrow^{\set{0,1}} x \searrow^{\set{0}} a^{2} \nearrow^{\set{0}} y
  \searrow^{\set{0,1}} \delta^{^{1}}_{\set{0,1}}(y) \nearrow^{\set{0}} c^{3} \searrow^{\set{0}} e$
  enters from $s$ into $x$,
  crosses the $1$-cell $a^{2}$, the $2$-cell $y$ and the top-right corner $e$ of $y$, and finally
  goes through $c^{3}$ into the endpoint $e$ of $c^{3}$.
  Its label is given by
  \begin{equation*}
    \ev(\alpha) =
    \begin{pmatrix}
      a \bullet \\
      b \bullet
    \end{pmatrix}
    *
    \begin{pmatrix}
      \bullet a \bullet \\
      \bullet b
    \end{pmatrix}
    *
    \begin{pmatrix}
      d \bullet \\
      \bullet a \bullet
    \end{pmatrix}
    *
    \begin{pmatrix}
      \bullet d \\
      \bullet a
    \end{pmatrix}
    *
    \begin{pmatrix}
      c
    \end{pmatrix}
    * \varepsilon
    =
    \begin{tikzpicture}[pomset]
      \matrix[name=m, commutative diagrams/cramped] {
        & a &   \\
        &   & c \\
        b & d &   \\
      };
      \path[commutative diagrams/.cd, every arrow, every label]
      (m-1-2) edge (m-2-3)
      (m-3-1) edge (m-3-2)
      (m-3-2) edge (m-2-3);
    \end{tikzpicture}
  \end{equation*}
\end{example}

For a precubical set or HDA $X$ we define $P_X$ as the set of paths in $X$. For a path $\alpha=\left(x_0,\varphi_1,x_1,...,\varphi_n,x_n\right)$ we call $\pSrc\left(\alpha\right)=x_0$ the source and $\pTrg\left(\alpha\right)=x_n$ the target of the path. We can now define the languages of HDA.
\begin{definition} \label{hda-lang-def}
  The language of an HDA $X$ is the set of interval ipomsets
\[
L(X)=\left\{\ev\left(\alpha\right)\mid \alpha\in P_X\text{, }\pSrc\left(\alpha\right)\in X_{\bot}\text{, } \pTrg\left(\alpha\right)\in X^{\top}\right\}
\]
\end{definition}

We refer to a path $\alpha$ with $\pSrc\left(\alpha\right)\in X_{\bot}$ and $\pTrg\left(\alpha\right)\in X^{\top}$ as an accepting path. In \cref{hda-lang-interval} we will prove that for each HDA $X$ the language $L(X)$ of $X$ is a down-closed interval ipomset language, see \cref{lang-definition}.
Let $X$ and $Y$ be precubical sets with the precubical map $f:X\to Y$. For each path $\alpha=\left(x_0,\varphi_1,x_1,...,\varphi_n,x_n\right)$ in $X$ with $x_k\in X\left[U_k\right]$ we define
\begin{equation*}  f\left(\alpha\right)=\left(f_{U_0}\left(x_0\right),\varphi_1,f_{U_1}\left(x_1\right),...,\varphi_n,f_{U_n}\left(x_n\right)\right) \, ,
\end{equation*}
which by definition of the precubical maps is a path in $Y$.
The following to two lemmas show that precubical maps and HDA maps preserve paths and languages.
\begin{lemma} \label{path-theorem}
Let $X$ and $Y$ be precubical sets and let $f:X\to Y$ be a precubical map. Suppose that we have $\alpha, \beta\in P_X$ with $\pSrc\left(\alpha\right) = \pTrg\left(\beta\right)$. Then we have $\ev\left(\alpha*\beta\right)=\ev\left(\alpha\right)*\ev\left(\beta\right)$ and $\ev\left(f\left(\alpha\right)\right)=\ev\left(\alpha\right)$.
\end{lemma}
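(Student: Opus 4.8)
The plan is to prove both equalities by reducing every path to its unique decomposition into steps and then arguing one step at a time. Recall that a path $\alpha$ factors uniquely as a concatenation of steps $\alpha = \gamma_1 * \dotsm * \gamma_n$, and that the labelling is defined by $\ev(\alpha) = \ev(\gamma_1) \glue \dotsm \glue \ev(\gamma_n)$, where $\glue$ is the gluing composition of ipomsets (written $*$ in the statement of the lemma). Throughout I rely on the gluing composition being associative and on consecutive step labels matching sequentially, so that every gluing appearing below is defined; both are part of the basic theory of interval ipomsets recalled in \cref{sec:composition}.

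For the first equality, write $\alpha = \gamma_1 * \dotsm * \gamma_n$ and $\beta = \delta_1 * \dotsm * \delta_m$ as concatenations of steps. By the definition of concatenation, $\alpha * \beta = \gamma_1 * \dotsm * \gamma_n * \delta_1 * \dotsm * \delta_m$ is exactly the step factorisation of the composite path. Applying the recursive definition of $\ev$ to this factorisation and then regrouping via associativity of $\glue$ gives
\[
  \ev(\alpha * \beta)
  = \ev(\gamma_1) \glue \dotsm \glue \ev(\gamma_n) \glue \ev(\delta_1) \glue \dotsm \glue \ev(\delta_m)
  = \ev(\alpha) \glue \ev(\beta).
\]
The only content beyond unfolding the definitions is the associativity of the gluing composition.

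For the second equality I would first treat the elementary paths and then lift to arbitrary paths using the first equality. A precubical map $f$ is a natural transformation, so each component $f[U] \from X[U] \to Y[U]$ preserves the lo-set $U$ and commutes with the face maps, that is, $f[U](\delta^0_A(y)) = \delta^0_A(f[V](y))$ and likewise for $\delta^1_B$. Hence $f$ sends an up-step $x \nearrow^A y$ to the up-step $f[U](x) \nearrow^A f[V](y)$ and a down-step $x \searrow_B y$ to $f[U](x) \searrow_B f[V](y)$, with the same coface data $A$, $B$ and the same underlying lo-sets. Since the label of a step depends only on these lo-sets and on $A$, $B$ — never on the concrete cell — we get $\ev(f(\gamma)) = \ev(\gamma)$ for every step $\gamma$, and similarly for the length-$0$ path. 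For a general $\alpha = \gamma_1 * \dotsm * \gamma_n$ we have $f(\alpha) = f(\gamma_1) * \dotsm * f(\gamma_n)$, so the first equality together with the step case yields $\ev(f(\alpha)) = \ev(f(\gamma_1)) \glue \dotsm \glue \ev(f(\gamma_n)) = \ev(\gamma_1) \glue \dotsm \glue \ev(\gamma_n) = \ev(\alpha)$.

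The routine parts are the unfoldings of the definitions and the naturality computation. The main thing to get right — and the only place where genuine input about ipomsets rather than about precubical sets is used — is that the gluing composition is associative and, more delicately, that the source and target interfaces of consecutive step labels really match sequentially, so that each intermediate gluing is defined. I would discharge this by inspecting the source and target sets produced by the up- and down-step labels and checking that they are identified by the unique order isomorphism required in \cref{def:glueing}.
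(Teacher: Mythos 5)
Your proof is correct and takes essentially the same approach as the paper: the paper's entire proof of this lemma is the one-line remark that it ``follows directly from the definition of $\ev$'', and your argument simply spells out that unfolding (step decomposition of paths, associativity of the gluing composition, and the fact that precubical maps preserve steps together with their coface data, so step labels are unchanged). The interface-matching point you flag at the end is real but is already implicit in the paper's definition of $\ev$, so nothing further is needed.
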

\begin{proof}
This follows directly from the definition of $\ev$.
\end{proof}

\begin{lemma} \label{language-subset}
Let $X$ and $Y$ be HDA and let $f:X\to Y$ be a HDA map. Then we have $L(X)\subseteq L(Y)$. If $f$ is an isomorphism then we have $L(X)=L(Y)$.
\end{lemma}
\begin{proof}
  If $P\in L(X)$ then there exists a path $\alpha$ in $X$ with $\pSrc\left(\alpha\right)\in \sCells{X}$ and $\pTrg\left(\alpha\right)\in \aCells{X}$ such that $\ev\left(\alpha\right)=P$.
  \Cref{path-theorem} gives us that $f\left(\alpha\right)$ is a path in $Y$ and because HDA maps preserve starting and accepting cells we have $\pSrc\left(f\left(\alpha\right)\right)\in \sCells{X}$ and $\pTrg\left(f\left(\alpha\right)\right)\in \aCells{X}$ and therefore $P=\ev\left(\alpha\right)=\ev\left(f\left(\alpha\right)\right)\in L(Y)$.

In the case that $f:X\to Y$ is an isomorphism there exists an inverse map $f^{-1}:Y\to X$, which gives us $L(Y)\subseteq L(X)$ as well and therefore $L(X)=L(Y)$.
\end{proof}

\subsection{Composition of HDA and their languages}
\label{sec:comp-hda}

In this section, we show how the colimits of languages and the languages of colimits of diagrams of HDA relate.
\begin{lemma} \label{colimit-languages}
Let $(X,\phi)$ be a cocone of the small diagram $F:D\to\HDA$. Then we have $\bigcup_{d\in D}L\left(F(d)\right)\subseteq L(X)$.
\end{lemma}
\begin{proof}
For every $d\in D$ we have the HDA map $\phi(d):F(d)\to X$. \Cref{language-subset} then gives us that $L\left(F(d)\right)\subseteq L(X)$, from which the statement follows.
\end{proof}

We get equality in the case that $(X,\phi)$ is a coproduct or a filtered colimit, as we will prove with in the next two results.
\begin{lemma} \label{coproduct-language}
  Let $F \from D\to\HDA$ be a small discrete diagram of HDA with
  coproduct $\left(X,\phi\right) = \colim F$.
  Then we have $\bigcup_{d\in D}L\left(F(d)\right)=L(X)$.
\end{lemma}
\begin{proof}[Proof of \cref{coproduct-language} on \cpageref{coproduct-language}]
Suppose that we have $P\in L(X)$. Then there exists an accepting path $\alpha=\left(x_0,\varphi_1,x_1,...,\varphi_n,x_n\right)$ in $X$ with $\pSrc(\alpha)\in \sCells{X}$ and $\pTrg(\alpha)\in \aCells{X}$ such that $\ev(\alpha)=P$.

Finite presentability gives us that for each $x_k\in X\left[U_k\right]$ for $1\leq k\leq n$ and the object $U_k\in\Cube$ there exists a unique $d_k\in D$ and a unique $y_k\in F(d)\left[U_k\right]$ such that $\phi_{d_k}\left[U_k\right]\left(y_k\right)=x_k$. It also gives us that $y_1\in\sCells{F\left(d_1\right)}$ and $y_n\in\aCells{F\left(d_n\right)}$.

Suppose that we have $x_k=\delta^0_A\left(x_{k+1}\right)$. Because we have
\[
\phi_{d_k}\left[U_k\right]\left(y_k\right)=x_k=\delta_A^0\left(x_{k+1}\right)=\delta^0_A\circ \phi_{d_{k+1}}\left[U_{k+1}\right]\left(y_{k+1}\right)=\phi_{d_k}\left[U_k\right]\circ\delta^0_A\left(y_{k+1}\right)
\]
we get $d_k=d_{k+1}$ and $y_k=\delta^0_A\left(y_{k+1}\right)$ from finite presentability.
Analogously the same works for if we have $\delta_B^1\left(x_k\right)=x_{k+1}$.

Therefore there exists an accepting path $\alpha'=\left(y_0,\varphi_1,y_1,...,\varphi_n,y_n\right)$ in $F(d)$ with $d=d_1=d_2=...=d_n$ such that $\phi_d\left(\alpha'\right)=\alpha$. \Cref{path-theorem} gives us that $P=\ev(\alpha)=\ev\left(\alpha'\right)$ and therefore $\ev\left(\alpha'\right)\in L\left(F(d)\right)$. As a result we have that $P\in L(X)\implies P\in\bigcup_{d\in D}L\left(F(d)\right)$. Combined with \cref{colimit-languages} this proves the statement.
\end{proof}

\begin{theorem} \label{hda-lang-filtered-colim}
  Let $F \from D \to \HDA$ be a small filtered diagram of HDA with
  filtered colimit $\left(X,\phi\right) = \colim F$.
  Then we have $\bigcup_{d\in D}L\left(F(d)\right)=L(X)$.
\end{theorem}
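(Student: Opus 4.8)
Since \cref{colimit-languages} already yields $\bigcup_{d\in D}L(F(d))\subseteq L(X)$, the plan is to establish only the reverse inclusion $L(X)\subseteq\bigcup_{d\in D}L(F(d))$. The structural input I would use is \cref{HDA-cocomplete}: because $\forget$ creates colimits and colimits of presheaves are pointwise, for each $U\in\Cube$ the set $X[U]$ is the filtered colimit $\colim_{d\in D}F(d)[U]$ in $\SetC$, with cocone components $\phi_d[U]$. I would first record the two standard facts about filtered colimits of sets: (i) every $x\in X[U]$ equals $\phi_d[U](y)$ for some $d\in D$ and $y\in F(d)[U]$; and (ii) if $a,b\in F(d)[U]$ satisfy $\phi_d[U](a)=\phi_d[U](b)$, then some $g\from d\to d'$ gives $F(g)[U](a)=F(g)[U](b)$. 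I would also note that, for any colimit of HDA, the interfaces are $\sCells{X}=\bigcup_{d}\phi_d(\sCells{F(d)})$ and $\aCells{X}=\bigcup_{d}\phi_d(\aCells{F(d)})$; this follows from the universal property together with the requirement that HDA maps preserve interfaces, and it is the key to tracing an accepting/starting cell of $X$ back into the diagram.

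Now let $P\in L(X)$ be witnessed by an accepting path $\alpha=(x_0,\varphi_1,x_1,\dots,\varphi_n,x_n)$ with $\ev(\alpha)=P$, $x_0\in\sCells{X}$, $x_n\in\aCells{X}$, and $x_k\in X[U_k]$. By fact (i) each $x_k$ lifts along some $\phi_{d_k}$, and by the interface description there are $e_0,e_n$ with a starting cell $s_0\in\sCells{F(e_0)}$, $\phi_{e_0}(s_0)=x_0$ and an accepting cell $t_n\in\aCells{F(e_n)}$, $\phi_{e_n}(t_n)=x_n$. Since this is a finite list of objects and $D$ is filtered, property (1) lets me choose a single $d$ receiving morphisms from all of $d_0,\dots,d_n,e_0,e_n$; transporting the chosen lifts along these morphisms gives $y_k\in F(d)[U_k]$ together with $\hat s_0,\hat t_n$ in $F(d)$ that are still a starting and an accepting cell (diagram maps preserve interfaces). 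Cocone compatibility then gives $\phi_d[U_k](y_k)=x_k$ for all $k$, and $\phi_d(\hat s_0)=x_0$, $\phi_d(\hat t_n)=x_n$.

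The heart of the argument is to force finitely many coincidences to hold at one later stage. Naturality of the cocone component $\phi_d$ turns each incidence of $\alpha$ into a coincidence in the colimit: for an up-step $\phi_d(y_{k-1})=\phi_d(\delta^0_A(y_k))$, for a down-step $\phi_d(\delta^1_B(y_{k-1}))=\phi_d(y_k)$, and additionally $\phi_d(y_0)=\phi_d(\hat s_0)$ and $\phi_d(y_n)=\phi_d(\hat t_n)$. Each of these finitely many coincidences is equalised at some stage by fact (ii), and filteredness (properties (1) and (2)) lets me merge the resulting morphisms into a single $g\from d\to d^{*}$ after which all of them hold. Setting $z_k=F(g)[U_k](y_k)$, the relations $z_{k-1}=\delta^0_A(z_k)$ (resp.\ $\delta^1_B(z_{k-1})=z_k$) hold in $F(d^{*})$, while $z_0=F(g)(y_0)=F(g)(\hat s_0)\in\sCells{F(d^{*})}$ and $z_n=F(g)(\hat t_n)\in\aCells{F(d^{*})}$ because $F(g)$ preserves interfaces. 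Hence $\alpha'=(z_0,\varphi_1,z_1,\dots,\varphi_n,z_n)$ is an accepting path in $F(d^{*})$ with $\phi_{d^{*}}(\alpha')=\alpha$, so \cref{path-theorem} gives $\ev(\alpha')=\ev(\phi_{d^{*}}(\alpha'))=\ev(\alpha)=P$ and therefore $P\in L(F(d^{*}))\subseteq\bigcup_{d\in D}L(F(d))$.

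The main obstacle I anticipate is the bookkeeping in this last step: assembling the exact finite list of equalities (one per step plus the two interface conditions) and merging the finitely many equalising morphisms into a single $g\from d\to d^{*}$ via filteredness, while verifying that applying $F(g)$ preserves all previously arranged data simultaneously — the identities $\phi_{d^{*}}(z_k)=x_k$ persisting automatically by cocone compatibility. A secondary point that must be justified rather than assumed is the interface description $\sCells{X}=\bigcup_d\phi_d(\sCells{F(d)})$ of a colimit HDA, since it is precisely what allows the starting and accepting cells of the lifted path to be located inside a single $F(d^{*})$.
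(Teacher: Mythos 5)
Your proposal is correct and follows essentially the same route as the paper: the paper also reduces to the reverse inclusion via \cref{colimit-languages}, lifts the finite set of cells of an accepting path to a single stage of the filtered diagram, and uses the reflection of starting/accepting cells to make the lifted path accepting, concluding with \cref{path-theorem}. The only difference is organisational: the paper delegates your ``bookkeeping'' step to two appendix lemmas (\cref{subset-diagram-bijection} for lifting a face-map-closed finite set of cells, and \cref{surjective-injections} for the interface description), whereas you inline that argument — and in doing so you actually spell out more explicitly than the paper how the equalisation of the endpoint lifts with interface-respecting lifts is merged into a single stage $d^{*}$.
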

\Cref{hda-lang-filtered-colim} together with \cref{pcs-filtered-colimit-finite-pcs} shows that all HDA and their languages can be expressed as combination of finite HDA and union of languages. This powerful tool allows us to prove statements about the languages of HDA in a simple way by using the filtered colimits of finite HDA demonstrated by the following \namecref{hda-lang-interval}.

\begin{lemma} \label{hda-lang-interval}
The languages of HDA are down-closed interval ipomset languages.
\end{lemma}
\begin{proof}
For a finite HDA $X$, $L(X)$ is a language by \cite[Prop.~10]{FJS+22:KleeneTheoremHigherDimensional}.
Suppose that $X$ is an HDA.
From \cref{pcs-filtered-colimit-finite-pcs} we get a filtered diagram $F \from D\to\HDA$ of finite HDA such that $X\cong\colim_{d\in D}F(d)$. \cref{language-subset} and \cref{hda-lang-filtered-colim} give us that
\[
L(X) = L \parens[\Big]{\colim_{d\in D}F(d)} = \bigcup\nolimits_{d\in D}L\left(F(d)\right)
\]
The result follows because languages are closed arbitrary unions, see \cref{down-closure-union}.
\end{proof}

Since $\Lang$ is the category having down-closed interval ipomset languages as objects and the subset inclusion maps as morphisms, \cref{hda-lang-interval,language-subset} allow us to see $L$ as a functor $L:\HDA\to\Lang$. Since the colimit of a diagram of languages is the union, \cref{coproduct-language,hda-lang-filtered-colim} give us that $L$ preserves coproducts and filtered colimits. However, it does not preserve all colimits as we show with the next \namecref{hda-lang-colim-counter}.
\begin{proposition} \label{hda-lang-colim-counter}
  There is a small diagram $F \from D \to \HDA$, whose colimit accepts more than the HDA in the diagram together:
  $\bigcup_{d\in D}L\left(F(d)\right) \subsetneq L(\colim  F)$.
\end{proposition}
\begin{proof}
  We use for $D$ the category of shape $1 \leftarrow 2 \rightarrow 3$.
  Consider the following pushout of HDA, which is a colimit over a diagram of shape $D$.
  \begin{equation*}
\begin{tikzcd}
	{(\circ)} & {(\Rightarrow \bullet \xrightarrow{a} \circ)} \\
	{(\circ \xrightarrow{c} \bullet \Rightarrow)} & {(\Rightarrow \bullet \xrightarrow{a} \bullet \xrightarrow{c} \bullet \Rightarrow)}
	\arrow["{i_2}"', from=1-1, to=2-1]
	\arrow["{i_1}", from=1-1, to=1-2]
	\arrow[from=1-2, to=2-2]
	\arrow[from=2-1, to=2-2]
	\arrow["\lrcorner"{anchor=center, pos=0.125, rotate=180}, draw=none, from=2-2, to=1-1]
\end{tikzcd}
  \end{equation*}
  The inclusions $i_{k}$ map $\circ$ to $\circ$ and the double arrows indicate starting
  and accepting cells.
  Note that the languages of the HDA at the corners are all empty, except of the HDA at the
  bottom right corner, which accepts the word $(a \to c)$.
  Thus the pushout of these HDA with empty languages has a non-empty language.
  \qed
\end{proof}
Finally, we prove that the language of the tensor product of two HDA is the same as the parallel composition of their two individual languages.
\begin{theorem} \label{languages-tensor}
  The functor $L \from (\HDA, \otimes, I) \to (\Lang, \parallel, \set{\emptyLO}) $ is strict monoidal.
\end{theorem}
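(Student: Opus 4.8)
The plan is to exploit that $\Lang$ is a thin category, so every monoidal coherence condition holds automatically and it remains only to verify the two equalities $L(I) = \set{\emptyLO}$ and $L(X \otimes Y) = L(X) \parallel L(Y)$ (functoriality of the structure maps being vacuous in a poset). For the unit I would argue directly: the HDA $I = \Yo_{\varepsilon}$ has a single $0$-cell that is both starting and accepting and no higher cells, so its only accepting path is the length-$0$ path, whose label is $\emptyLO$ by definition of $\ev$. Hence $L(I) = \set{\emptyLO}$, and since $\emptyLO \parallel Q = Q$ and every $L(Y)$ is down-closed, $\set{\emptyLO}$ is indeed the monoidal unit of $(\Lang, \parallel)$.

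For the tensor equality I would first use the lfp structure to reduce to finite HDA. By \cref{pcs-filtered-colimit-finite-pcs} write $X = \colim_{d \in D} F(d)$ and $Y = \colim_{e \in E} G(e)$ as filtered colimits of finite HDA. Since $\otimes$ preserves colimits in both variables (\cref{def:hda-tensor}) and $D \times E$ is again filtered, we get $X \otimes Y = \colim_{(d,e)} F(d) \otimes G(e)$, so \cref{hda-lang-filtered-colim} yields $L(X \otimes Y) = \bigcup_{(d,e)} L(F(d) \otimes G(e))$. On the other side, \cref{hda-lang-filtered-colim} together with \cref{parallel-lang-union-commute} gives $L(X) \parallel L(Y) = (\bigcup_d L(F(d))) \parallel (\bigcup_e L(G(e))) = \bigcup_{(d,e)} L(F(d)) \parallel L(G(e))$. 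It therefore suffices to establish $L(X \otimes Y) = L(X) \parallel L(Y)$ for finite $X$ and $Y$.

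For finite $X$ and $Y$ I would set up a path correspondence based on the characterisation $(X \otimes Y)[U] \cong \coprod_{U = V \cubeTens W} X[V] \times Y[W]$ from \cref{def:hda-tensor}: a cell of the tensor is a pair $(x,y)$ relative to a splitting $U = V \cubeTens W$, and a coface into $U$ splits as a pair of cofaces into $V$ and $W$. Consequently every step of a path $\gamma$ in $X \otimes Y$ projects to a step-or-stay in each factor, yielding accepting paths $\alpha$ in $X$ and $\beta$ in $Y$ whenever $\gamma$ is accepting (the starting resp. accepting cells of the tensor being exactly the pairs of starting resp. accepting cells). Because an $X$-event and a $Y$-event occupy orthogonal coordinates and $\dr$ on $V \cubeTens W$ places $V$ before $W$, the event order of $\ev(\gamma)$ is the join of those of $\ev(\alpha)$ and $\ev(\beta)$, while any precedence $\gamma$ adds between an $X$- and a $Y$-event only makes it more ordered. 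This gives $\ev(\gamma) \sqsubseteq \ev(\alpha) \parallel \ev(\beta)$, and since $\ev(\alpha) \in L(X)$, $\ev(\beta) \in L(Y)$ and $L(X) \parallel L(Y)$ is down-closed, we obtain $L(X \otimes Y) \subseteq L(X) \parallel L(Y)$.

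For the reverse inclusion I would, given accepting paths $\alpha, \beta$ with $\ev(\alpha) = P$ and $\ev(\beta) = Q$, construct a single accepting path $\gamma$ in $X \otimes Y$ with $\ev(\gamma) = P \parallel Q$ exactly, and then invoke down-closedness of $L(X \otimes Y)$ (\cref{hda-lang-interval}) to absorb all of $\downCl{\set{P \parallel Q}}$. The subtlety — and the step I expect to be the main obstacle — is that $\gamma$ cannot simply run $\alpha$ to completion and then $\beta$: doing so glues $P$ and $Q$ along empty interfaces and yields the sequential composition $P \glue Q$, which forces all $P$-events before all $Q$-events and is strictly more ordered than $P \parallel Q$. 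To produce no cross-precedence one must interleave the up- and down-steps of $\alpha$ and $\beta$ so that their active events genuinely overlap, routing $\gamma$ through the higher cells $(x_i, y_j) \in (X \otimes Y)[V_i \cubeTens W_j]$ in which events of both factors are simultaneously active; concretely I would read off such an interleaving from an interval representation of $P \parallel Q$ and then verify, via \cref{path-theorem} and the gluing computation of $\ev$, that the resulting label is precisely $P \parallel Q$. Combining the two inclusions with the reduction of the previous paragraph and the unit case shows that $L$ is strict monoidal.
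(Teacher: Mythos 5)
Your global structure is exactly the paper's: both proofs reduce to finite HDA via \cref{pcs-filtered-colimit-finite-pcs}, use preservation of colimits by $\otimes$, then apply \cref{hda-lang-filtered-colim} and \cref{parallel-lang-union-commute}, and dispose of the unit by direct inspection. The one point of divergence is the base case: the paper simply cites Prop.~19 of Fahrenberg et al., which asserts $L(X\otimes Y)=L(X)\parallel L(Y)$ for \emph{finite} HDA, whereas you attempt to prove that statement yourself. Your forward inclusion is a plausible sketch of how that proof goes (modulo checking the interface and event-order clauses of subsumption), but your reverse inclusion has a genuine gap.

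You propose, for $P\in L(X)$ and $Q\in L(Y)$ realised by accepting paths $\alpha,\beta$, to construct a path $\gamma$ in $X\otimes Y$ with $\ev(\gamma)=P\parallel Q$ \emph{exactly}, reading the interleaving off ``an interval representation of $P\parallel Q$'', and then to invoke down-closedness of $L(X\otimes Y)$. This step fails: $P\parallel Q$ is in general not an interval ipomset, so it admits no interval representation and can never occur as the label of a path --- the paper notes that $\ev(\gamma)$ is always an interval ipomset. Concretely, take $P=Q=(a\to b)$, realised by finite HDA; then $P\parallel Q$ is the ipomset variant of the $(2+2)$-poset, which is the paper's own standard example of a non-interval ipomset. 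Down-closure cannot rescue this, since one may only down-close from elements already known to lie in $L(X\otimes Y)$, and $P\parallel Q$ is not such an element; indeed, this is precisely why \cref{def:paralell-lang} builds the down-closure (within $\iiPom$) into the definition of $L_1\parallel L_2$. The repair is to quantify over the members of $L(X)\parallel L(Y)$ directly: given an \emph{interval} ipomset $R\sqsubseteq P\parallel Q$, take an interval representation of $R$ (which exists by assumption), use it to linearly order the starts and ends of all events, and interleave the up- and down-steps of $\alpha$ and $\beta$ in that order, yielding $\gamma$ with $\ev(\gamma)=R$. That argument, carried out carefully, is exactly the content of the proposition the paper cites, so completing your route amounts to reproving it rather than finding a shortcut around it.
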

\begin{proof}
  Let $X$ and $Y$ be HDA.
  We have to show that $L\left(X\otimes Y\right)=L\left(X\right)\parallel L\left(Y\right)$.
  \Cref{pcs-filtered-colimit-finite-pcs} provides use with filtered diagrams
$F \from D\to\HDA$ and $G \from E\to\HDA$ of finite HDA with $X$ and $Y$ being their respective
filtered colimits.
This allows us to generalise \cite[Prop.~19]{FJS+22:KleeneTheoremHigherDimensional}, where
$L(X \otimes Y) = L(X) \parallel L(Y)$ is proved for finite HDA, to arbitrary HDA.
\begin{align*}
  L\left(X\otimes Y\right)
  & = L\parens[\Big]{\colim_{(d,e)\in D\times E}F(d)\otimes G(e)}
    \tag*{tensor product preserves colimits} \\
  & = \bigcup\nolimits_{(d,e)\in D\times E}L\left(F(d)\otimes G(e)\right)
    \tag*{by \cref{hda-lang-filtered-colim}} \\
  & = \bigcup\nolimits_{(d,e)\in D\times E}L\left(F(d)\right)\parallel L\left(G(e)\right)
  \tag*{\cite[Prop.~19]{FJS+22:KleeneTheoremHigherDimensional} for finite HDA} \\
  & = \bigcup\nolimits_{d\in D}L\left(F(d)\right) \, \parallel \, \bigcup\nolimits_{e\in E}L\left(G(e)\right)
    \tag*{by \cref{parallel-lang-union-commute}} \\
  & = L(X)\parallel L(Y)
    \tag*{by \cref{hda-lang-filtered-colim}}
\end{align*}
This shows that even for arbitrary HDA the parallel composition of their languages
is given by tensoring the HDA.
That $L(I) = \set{\emptyLO}$ is obvious.
\qed
\end{proof}

\section{Process Replication as Rational HDA}
\label{sec:replication}

In this section, we seek to complete the correspondence between concurrent Kleene algebras and HDA,
which requires us to identify a notion of \emph{rational HDA} that can capture finitely presented behaviour.
This has almost been accomplished~\cite{FJS+22:KleeneTheoremHigherDimensional} but the parallel closure
could not be realised as finite HDA.
For regular languages, linear weighted languages and various other languages without true
concurrency, the correspondence between languages and automata has been studied from a coalgebraic
perspective~\cite{BMS13:SoundCompleteAxiomatizations,Milius10:SoundCompleteCalculus,%
  Milius13:RationalOperationalModels}.
We make in \cref{sec:locally-compact} a first attempt, where we follow these ideas by studying
locally compact HDA and by showing how to realise the parallel closure as locally compact HDA.
However, we will see that this model is too powerful and will restrict to finitely branching
HDA in \cref{sec:finitely-branching}.
These can realise the parallel Kleene star as well, but will require an infinite choice
at the start.
Thus, none of these choices is satisfactory to act as rational HDA and we show that it is
impossible to realise the parallel closure as finitely branching HDA with finitely many
starting cells.

\subsection{Locally Compact HDA}
\label{sec:locally-compact}

Let us first define what we mean by locally compact HDA.
This follows work on rational coalgebraic
behaviour~\cite{Milius13:RationalOperationalModels,Milius10:SoundCompleteCalculus} and
can be seen as axiomatisation of the factorisation property that filtered colimits enjoy
in lfp categories.

\begin{definition}
  \label{def:locally-compact-hda}
  A HDA $\HDATup{X}$ is \emph{locally compact} if the forgetful functor
  $\forget \from \Comma{\HDAcomp}{X}\rightarrow \Comma{\presheaf{\Cube}_c}{X}$ is
  cofinal.
  Explicitly, this means~\cite[0.11]{adamek_rosicky_1994} that
  1) for all compact precubical set $P$ and $f \from P \to X$
  there is a factorisation of $f$ into $P \xrightarrow{f'} Y \xrightarrow{h} X$,
  where $h \from \HDATup{Y} \to \HDATup{X}$ is a
  HDA morphism and $\HDATup{Y}\in \HDAcomp$; and
  2) for all other $\HDATup{Y'} \in \HDAcomp$, $h' \from \HDATup{Y'} \to \HDATup{X}$
  and $f'' \from P \to Y'$ with
  $h' \comp f'' = f$, there exist $\HDATup{R}\in \HDAcomp$ and HDA morphisms
  $e \from \HDATup{Y'} \to \HDATup{R}$ and $e' \from \HDATup{Y} \to \HDATup{R}$ such that
  $e' \comp f' = e \comp f''$, see the following diagrams.
  \begin{equation*}
    \begin{tikzcd}
      P \ar[r, "f"] \ar[dr, pos=0.4,"f'"{below=2pt}]
      & X \\
      & Y \ar[u, "h"{right}]
    \end{tikzcd}
    \qquad
    \qquad
\begin{tikzcd}
	P & Y \\
	{Y'} & R
	\arrow["{f'}", from=1-1, to=1-2]
	\arrow["{f''}"', from=1-1, to=2-1]
	\arrow["{e'}", from=1-2, to=2-2]
	\arrow["e"', from=2-1, to=2-2]
\end{tikzcd}
\end{equation*}
\end{definition}

The following lemma shows that the second condition is redundant, which follows from $\HDA$ being
an lfp category.
\begin{lemma}
  \label{essentially-unique-redundant}
  An HDA $\HDATup{X}$ is locally compact if and only if all presheaf morphisms
  $f \from P \to X$ factor as
  $P \xrightarrow{f'} Y \xrightarrow{h} X$ into a presheaf morphism $f'$ and
  a HDA morphism $h \from \HDATup{Y} \to \HDATup{X}$ from $\HDATup{Y}\in \HDAcomp$.
\end{lemma}
\begin{proof}
  Suppose that we are given compact HDA
  $\mathcal{Y} = \HDATup{Y}$ and
  $\mathcal{Y}' = \HDATup{Y'}$ with morphisms
  $h \from \mathcal{Y} \to \HDATup{X}$,
  $h' \from \mathcal{Y'} \to \HDATup{X}$,
  $g \from P \to Y$
  and $g' \from P \to Y'$, such that
  $h \comp g = f$ and
  $h' \comp g' = f$.
  Since $\HDAcomp$ is closed under finite colimits, we can form the coproduct $\mathcal{Y} + \mathcal{Y}'$
  with inclusions $\kappa$ and $\kappa'$.
  Let $\copair{h,h'} \from \mathcal{Y} + \mathcal{Y}' \to \mathcal{X}$ be the copairing of $h$ and $h'$, where
  $\mathcal{X} = \HDATup{X}$.
  Because $\HDA$ is lfp, we can factor $\copair{h,h'}$ into an epimorphism $q$ and a monomorphism $m$:
  $\copair{h,h'} = \mathcal{Y} + \mathcal{Y}' \xrightarrow{e} \mathcal{R} \xrightarrow{m} \mathcal{X}$.
  We then define $e = q \comp \kappa$ and $e' = q \comp \kappa'$.
  Note that because $q$ is an epimorphism, $\mathcal{R}$ is a compact HDA.
  With this notation set up, we have
  \begin{equation*}
    me'g' = mq \kappa' g'= \copair{h,h'}\kappa' g' = h'g' = hg = \copair{h,h'} \kappa g = mq\kappa g = meg
  \end{equation*}
  and thus, since $m$ is mono, $e'g' = eg$.
\end{proof}

The next theorem shows that local compactness can be derived from the presentation of HDA as
filtered colimit of compact HDA.
\begin{theorem}
  \label{thm:all-hda-locally-compact}
  All HDA are locally compact.
\end{theorem}
\begin{proof}
  Let $X$ be an HDA.
  By \cref{pcs-filtered-colimit-finite-pcs}, $X$ is isomorphic to the
  colimit $\colim U_{X}$ of a filtered diagram $U_{X} \from D \to \HDAcomp$.
  Therefore, $\colim(D \to \HDAcomp \to \HDA)$ is locally compact because filtered colimits in
  lfp categories factor essentially uniquely through colimit inclusions.

  If $X$ is locally compact, then every $x \in X[U]$ generates a compact
  sub-precubical set $\pair{x} \hookrightarrow X$ that contains $x$ and all its boundary cells.
  This inclusion factors essentially uniquely into an inclusion of a compact HDA into
  $\colim U_{X}$ for every $U$ and $x \in X[U]$ by local compactness.
  It is easy to see that these inclusions jointly set up an isomorphism.
\end{proof}

\Cref{thm:all-hda-locally-compact} shows that local compactness is no restriction in the case of HDA,
contrary to other computational models.
Let us, nevertheless, apply the lessons of local compactness to get closer to an HDA that models
process replication in a reasonably finitary way.
Before that, let us warm up and construct a HDA as a filtered colimit with infinite branching.
\begin{example}
  \label{ex:infinite-branching}
  Let $F \from \Cat{D} \to \HDAcomp$ be the diagram given by
  \begin{equation*}
    \begin{tikzcd}[anchor=south]
      0 \ar[r, "a"] & 1
    \end{tikzcd}
    \quad \longrightarrow \quad
    \begin{tikzcd}[row sep=small, baseline=4pt,anchor=south]
      & 2 \\
      0 \ar[r, "a"] \ar[ur, "a"]
      & 1
    \end{tikzcd}
    \quad \longrightarrow \quad
    \begin{tikzcd}[row sep=tiny, baseline=4pt,anchor=south]
      3 & \\[-7pt]
      & 2 \\
      0 \ar[r, "a"]  \ar[ur, "a"] \ar[uu, "a"]
      & 1
    \end{tikzcd}
    \quad \longrightarrow \quad \dotsm
  \end{equation*}
  This is a chain and thus filtered, and its colimit a $1$-dimensional HDA with infinitely many
  branches coming out of $0$.
  Nevertheless, since each HDA in the chain is compact, $\colim F$ is locally compact.
\end{example}

\begin{example}
  \label{ex:locally-compact-replication}
  Similarly to \cref{ex:infinite-branching}, we can also branch with higher dimensions
  and thus realise process replication as filtered colimit of compact HDA.
  For the purpose of this example it is simpler to ignore starting cells, but it is
  easy to see that tensor product and colimits are not affected by this.

  Let $A$ be the HDA with one $1$-cell labelled with $a$ and the endpoint of this $1$-cell
  taken as accepting.
  This is illustrated in \cref{fig:process-replication-construction} on the left, where the double
  arrows mark accepting cells.
  \begin{figure}[ht]
    \vspace*{-2em}
    \begin{equation*}
      \begin{tikzcd}
        {} \\
        & \bullet \\
        & 0
        \arrow["a", from=3-2, to=2-2]
        \arrow[shorten >=22pt, Rightarrow, from=2-2, to=1-1]
      \end{tikzcd}
      \qquad
      \xrightarrow{d_{1}}
      \begin{tikzcd}
        {} & {} \\
        & \bullet & \bullet \\
        & 0 & \bullet
        \arrow["a"', from=3-2, to=3-3]
        \arrow["a", from=3-2, to=2-2]
        \arrow["a"', from=3-3, to=2-3]
        \arrow["a", from=2-2, to=2-3]
        \arrow[shorten >=22pt, Rightarrow, from=2-3, to=1-2]
        \arrow[shorten >=22pt, Rightarrow, from=2-2, to=1-1]
      \end{tikzcd}
      \quad
      \xrightarrow{d_{2}}
      \begin{tikzcd}[sep=small]
        {} && \bullet && \bullet & {} \\
        & \bullet && \bullet && {} \\
        && \bullet && \bullet \\
        & \bullet && \bullet
        \arrow["a"{description}, from=1-3, to=1-5]
        \arrow[shorten >=3pt, Rightarrow, from=1-5, to=1-6]
        \arrow[shorten >=6pt, Rightarrow, from=2-2, to=1-1]
        \arrow["a"{description}, from=2-2, to=1-3]
        \arrow["a"{description, pos=0.3}, from=2-2, to=2-4]
        \arrow["a"{description}, from=2-4, to=1-5]
        \arrow[shorten >=24pt, Rightarrow, from=2-4, to=2-6]
        \arrow["a"{description, pos=0.3}, from=3-3, to=1-3]
        \arrow["a"{description, pos=0.3}, from=3-3, to=3-5]
        \arrow["a"{description}, from=3-5, to=1-5]
        \arrow["a"{description}, from=4-2, to=2-2]
        \arrow[from=4-2, to=3-3]
        \arrow["a"{description}, from=4-2, to=4-4]
        \arrow["a"{description, pos=0.7}, from=4-4, to=2-4]
        \arrow["a"{description}, from=4-4, to=3-5]
      \end{tikzcd}
      \xrightarrow{d_{3}} \dotsm
    \end{equation*}
    \vspace*{-0.5cm}
    \caption{Chain of HDA to construct process replication of HDA $A$ on the left.
      The starting cell named $0$ shows how $d_{n}$ embeds the cells matching
      with the accepting cells.}
    \label{fig:process-replication-construction}
  \end{figure}
  The maps $d_{n} \from A_{n} \to A_{n+1}$ in \cref{fig:process-replication-construction},
  where $A_{1} = A$, are constructed as in the following pushout diagram.
  In this diagram, we denote by $A^{\otimes n}$ the $n$-fold tensor product of $A$ with itself,
  where $A^{\otimes 0} = I$.
  For an HDA $X$, we write $X^{\varepsilon}$ for the HDA that has the same underlying precubical
  set but no starting and accepting states.
  \begin{equation*}
    \begin{tikzcd}[row sep=small]
      {A^{\otimes n,\varepsilon}} & {A^{\otimes n,\varepsilon} \otimes I} & {A^{\otimes n+1}} & {A^{\otimes n+1,\varepsilon}} \\
      \\
      {A_n} && {A_{n+1}}
      \arrow["\cong", from=1-1, to=1-2]
      \arrow[from=1-2, to=1-3]
      \arrow["{i_{n}}"', from=1-1, to=3-1]
      \arrow["{d_{n}}"', from=3-1, to=3-3]
      \arrow[from=1-3, to=3-3]
      \arrow["\lrcorner"{anchor=center, pos=0.125, rotate=180}, draw=none, from=3-3, to=1-1]
      \arrow[from=1-4, to=1-3]
      \arrow["{i_{n+1}}"{description}, from=1-4, to=3-3]
    \end{tikzcd}
  \end{equation*}
  Intuitively, the HDA $A_{n+1}$ is given by extending $A_{n}$ to a full $n+1$-dimensional cube,
  where $A_{n}$ is included via $d_{n}$ as the ``front face''.
  In \cref{fig:process-replication-construction}, this inclusion is indicated by the vertex $0$,
  which is identified via $d_{n}$.
  The indicated maps $d_{n}$ form a chain of compact HDA and thus a filtered diagram
  $F \from \Cat{D} \to \HDAcomp$.
  By taking the colimit of $F$ and declaring the cell marked $0$ as starting cell, we
  obtain an HDA that accepts $L(A)^{(\ast)}$, the parallel Kleene closure of the language of
  $A$.
  That this is the case follows directly from \cref{languages-tensor} and
  \cref{hda-lang-filtered-colim}.
  Since each HDA in the chain is compact, $\colim F$ is locally compact, but this colimit is a
  HDA with infinitely many branches coming out of $0$.
\end{example}

\subsection{Finitely Branching HDA}
\label{sec:finitely-branching}

The HDA that we constructed in \cref{ex:locally-compact-replication} has the pleasant property
that during execution many $a$-processes can be spawned, as one would expect from a process
replication operator that occurs in process algebra.
However, the HDA in \cref{ex:locally-compact-replication} has infinitely many cells branching out
of any cell.
This makes it impossible to realise this HDA on a physical machine and motivates another possible
definition of what one may consider rational HDAs.

\begin{definition}
  \label{def:finitely-branching-hda}
  A HDA $X$ is \emph{finitely branching} if for all lo-sets $U \cup \set{a}$ and all $x \in X_{U}$ the
  set $\setDef{y \in X_{U \cup \set{a}}}{\delta_{A,B}(y) = x}$ is finite.
  We denote by $\HDAfb$ the full subcategory of $\HDA$ that consists of
  finitely branching HDA.
\end{definition}

Clearly, finitely branching HDA are not closed under filtered colimits, as
\cref{ex:locally-compact-replication} shows.
However, they are closed under coproducts.

\begin{lemma}
  \label{fb-hda-coproduct-closed}
  Let $F \from \Cat{D} \to \HDAfb$ a diagram on a small discrete category $D$.
  Then the colimit (coproduct) $\colim F$ exists in $\HDAfb$.
\end{lemma}

The parallel Kleene star of a finitely branching HDA $X$, also known as process replication, to obtain the parallel Kleene closure of its language, see \cref{def:paralell-lang}, can be
realised as finitely branching HDA.
We write $X^{\otimes n}$ for the $n$-fold tensor product of $X$ with itself, where
$X^{\otimes 0} = I$, and define the parallel replication of $X$ to be
$!X = \coprod_{n \in \N}X^{\otimes n}$.

\begin{theorem}
  \label{thm:process-replication-finite-branching}
  The HDA $!X$ is finitely branching and we have $L(!X) = L(X)^{(\ast)}$.
\end{theorem}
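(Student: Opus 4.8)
The plan is to establish the two claims of \cref{thm:process-replication-finite-branching} separately, namely that $!X$ is finitely branching and that $L(!X) = L(X)^{(\ast)}$.

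For the finite branching, first I would observe that $!X = \coprod_{n \in \N} X^{\otimes n}$ is a coproduct, so by the explicit description of coproducts of precubical sets (colimits computed pointwise in $\presheaf{\Cube}$), the cells of $!X$ in dimension $k$ are the disjoint union of the cells of the $X^{\otimes n}$ in dimension $k$. The face maps act within each summand, so a cell $y$ in the upper dimension that satisfies $\delta_{A,B}(y) = x$ must lie in the same summand $X^{\otimes n}$ as $x$. Hence it suffices to show that each $X^{\otimes n}$ is finitely branching whenever $X$ is. For this I would use the coend/coproduct characterisation of the tensor product recorded after \cref{def:hda-tensor}, namely $(X \otimes Y)[U] \cong \coprod_{U = V \cubeTens W} X[V] \times Y[W]$, and argue by induction on $n$: a cell of $X^{\otimes n}$ over $U$ is a tuple of cells over a fixed decomposition of $U$, and since $\Cube$ has only finitely many ways to decompose a given lo-set $U$ as $V \cubeTens W$, together with the inductive hypothesis that the factors are finitely branching, only finitely many upper cells can have a given lower face.

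For the language identity, the strategy is to reduce entirely to results already proved. By \cref{coproduct-language}, the language of a coproduct is the union of the languages of the summands, so
\begin{equation*}
  L(!X) = L\parens[\Big]{\coprod\nolimits_{n \in \N} X^{\otimes n}} = \bigcup\nolimits_{n \in \N} L\parens*{X^{\otimes n}}.
\end{equation*}
Next I would use that $L$ is strict monoidal by \cref{languages-tensor}, i.e. $L(X \otimes Y) = L(X) \parallel L(Y)$ and $L(I) = \set{\emptyLO}$, to prove by induction on $n$ that $L(X^{\otimes n}) = L(X)^{\parallel n}$; the base case $n = 0$ gives $L(X^{\otimes 0}) = L(I) = \set{\emptyLO} = L(X)^{\parallel 0}$, and the inductive step applies \cref{languages-tensor} to $X \otimes X^{\otimes n}$. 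Substituting back yields $L(!X) = \bigcup_{n \in \N} L(X)^{\parallel n}$, which is exactly the definition of the parallel Kleene closure $L(X)^{(\ast)}$ from \cref{def:paralell-lang}.

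The main obstacle I anticipate is the finite branching argument for the tensor product, since this requires unwinding the coend definition carefully and checking that the face maps of $X^{\otimes n}$ respect the coproduct decomposition, so that branching cannot leak between summands of the decomposition $U = V \cubeTens W$. The language part, by contrast, should be essentially formal once the monoidal and coproduct preservation lemmas are invoked; the only subtlety there is confirming that the indexing of the coproduct matches the indexing of the Kleene closure, which is immediate. One should also take care that ignoring starting cells, as was convenient in \cref{ex:locally-compact-replication}, is not done here: the coproduct genuinely collects all starting cells of the summands, and this is precisely what makes the infinitely many initial states appear, consistent with the discussion in the outline.
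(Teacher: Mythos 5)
Your proof of the language identity is exactly the paper's argument: apply \cref{coproduct-language} to the coproduct $\coprod_{n\in\N}X^{\otimes n}$, then \cref{languages-tensor} (with the induction on $n$ that the paper leaves implicit) to get $L(X^{\otimes n}) = L(X)^{\parallel n}$, and conclude by the definition of $L(X)^{(\ast)}$. For the finite-branching claim the paper's proof is actually silent, treating it as immediate, whereas you supply a correct argument: face maps of a coproduct stay within summands, and the decomposition $(X \otimes Y)[U] \cong \coprod_{U = V \cubeTens W} X[V] \times Y[W]$, with only finitely many splittings of $U$, yields finite branching of each $X^{\otimes n}$ by induction on $n$. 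So your proposal matches the paper where the paper gives a proof, and is more complete where it does not.
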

\begin{proof}
  By \cref{coproduct-language} and \cref{languages-tensor} we have
  \begin{equation*}
    L(!X)
    = L\parens*{\coprod\nolimits_{n\in\N}X^{\otimes n}}
    = \bigcup\nolimits_{n \in \N}L\parens*{X^{\otimes n}}
    = \bigcup\nolimits_{n \in \N}L\parens*{X}^{\parallel n}
    = L(X)^{(\ast)}
  \end{equation*}
  That $!X$ is finitely branching is given by \cref{fb-hda-coproduct-closed}.
\end{proof}

The caveat of this theorem, and the definition of finitely branching in general, is that
we do not make any restrictions on the number of starting cells.
In fact, $!X$ will have infinitely many starting cells, if $X$ has at least one.

\begin{example}
  \label{ex:finitely-branching-replication}
  Let $A$ again be the HDA as in \cref{ex:locally-compact-replication}.
  The HDA $!A$ looks as in \cref{fig:process-replication-finitely-branching}.
  \begin{figure}[ht]
    \begin{equation*}
      \begin{tikzcd}[column sep=small]
        & {} \\
        1 & \bullet
        \arrow["a", from=2-1, to=2-2]
        \arrow[shorten >=5pt, Rightarrow, from=2-2, to=1-2]
      \end{tikzcd}
      \qquad
      \begin{tikzcd}
        \bullet & \bullet & {} \\
        2 & \bullet
        \arrow["a"{description}, from=1-1, to=1-2]
        \arrow[shorten >=8pt, Rightarrow, from=1-2, to=1-3]
        \arrow["a"{description}, from=2-1, to=1-1]
        \arrow["a"{description}, from=2-1, to=2-2]
        \arrow["a"{description}, from=2-2, to=1-2]
      \end{tikzcd}
      \quad
      \begin{tikzcd}[sep=small]
        & \bullet && \bullet & {} \\
        \bullet && \bullet \\
        & \bullet && \bullet \\
        3 && \bullet
        \arrow["a"{description}, from=1-2, to=1-4]
        \arrow[shorten >=5pt, Rightarrow, from=1-4, to=1-5]
        \arrow["a"{description}, from=2-1, to=1-2]
        \arrow["a"{description, pos=0.3}, from=2-1, to=2-3]
        \arrow["a"{description}, from=2-3, to=1-4]
        \arrow["a"{description, pos=0.3}, from=3-2, to=1-2]
        \arrow["a"{description, pos=0.3}, from=3-2, to=3-4]
        \arrow["a"{description}, from=3-4, to=1-4]
        \arrow["a"{description}, from=4-1, to=2-1]
        \arrow[from=4-1, to=3-2]
        \arrow["a"{description}, from=4-1, to=4-3]
        \arrow["a"{description, pos=0.7}, from=4-3, to=2-3]
        \arrow["a"{description}, from=4-3, to=3-4]
      \end{tikzcd}
      \dotsm
    \end{equation*}
    \vspace*{-0.3cm}
    \caption{Finitely branching HDA for replication of $A$ constructed as coproduct, where
      the cells $1,2,3,\dotsc$ are starting cells and double arrows mark
      accepting cells}
    \label{fig:process-replication-finitely-branching}
  \end{figure}
  Notice that it consists of little finite islands, each with a starting cell.
  The HDA has to make at the beginning of an execution a choice on the
  number of parallel executions of the action $a$.
  This means that this HDA is not realisable, as such a guess requires knowledge about
  how many parallel processes will be needed.
  For instance, a web server would need to know \emph{when it is started} how many clients
  will connect during its life time.
  This is clearly impossible.
\end{example}

The \cref{ex:locally-compact-replication,ex:finitely-branching-replication} show that
either way of realising process replication, as locally compact HDA or as finitely branching
HDA, leads to operational problems.
In fact, it is not possible to realise process replication as finitely branching HDA with
finite starting cells.

\begin{theorem}
  \label{thm:kleene-par-star-impossible}
  There is no HDA $X \in \HDAfb$ with finitely many initial states, such that $X$ would realise
  the parallel Kleene star of $L(A) = \set{(a)}$, where $A$ is the HDA with only one $a$-transition,
  as in \cref{ex:locally-compact-replication}.
\end{theorem}
\begin{proof}
  Suppose there is an HDA $X \in \HDAfb$ with finite initial states, such that
  $L(X) = L(A)^{(\ast)} = \set{(a)}^{(\ast)}$.
  We partition $L(X)$ into languages $L_{x}$ for $x \in \sCells{X}$.
  Since $\sCells{X}$ is finite, some $L_{x}$ must be infinite.
  Thus for every $\set{(a)}^{\parallel n} \in L_{x}$
  there must be an $n$-cell of which $x$ is a boundary.
  But then $X$ has infinitely many branches at $x$, and thus $X$ cannot exist
  with the proclaimed properties.
\end{proof}

Since the identity language for gluing has infinite width~\cite[Example~4]{FJS+22:KleeneTheoremHigherDimensional},
it cannot be presented by a finite HDA.
One can provide a finitely branching HDA that accepts the identity language, but with
infinitely many starting cells.
Thus, even this simple language does not fit into any reasonable restriction of HDA.

\section{Conclusion}
\label{sec:concl}

What does this leave us with?
The problem is that HDA combine state space and transitions into one object, a precubical set.
Intuitively, this prevents us from having transitions and cycles among cells of higher dimension.
More technically, the locally compact HDA allow infinite branching, while finite branching limits
the number of active parallel events to be finite.
This can be compared to the coalgebras for the finite powerset functor, also known as finitely
branching transition systems.
Here, locally compact transition systems may only have finite branching and thus realise
locally the behaviour of finite transition systems, as one would expect.
Therefore, one is led to the conclusion that, despite their semantic value, HDA as an operational
computational model are unsuited to
model process replication and another model for true concurrency has to be sought.
This is not to say that topological or geometrical models, like HDA, are inherently flawed but rather
that they have to be expanded to allow for the dynamic spawning of processes, in contrast to the
static nature of HDA.

\emergencystretch=1em
\bibliographystyle{alpha}
\bibliography{hda-colimits.bib}

\appendix

\section{Notation}
\label{sec:notation}

\begin{tabular}{l|Hl}
  Notation & Macro & Meaning \\\hline
  $\StdCat{C}$ & \macroA{StdCat}{C} & Standard or specific categories \\
  $\SetC$ & \macro{SetC} & Category of sets \\
  $\TopC$ & \macro{TopC} & Category of topological spaces \\
  $\Yo$ & \macro{Yo} & Yoneda embedding \\
  $\Sigma$ & \macro{Sigma} & Fixed alphabet \\
  $\car{P}$ & \macroA{car}{P} & Carrier of iposet $P$ \\
  $\downCl{A}$ & \macroA{downCl}{A} & Downwards closure \\
  $\varepsilon$ & \macro{emptyLO} & empty lo-set \\
  $\lSLO$ & \macro{lSLO} & category of labelled strict linear orders \\
  $\sloTens$ & \macro{sloTens} & monoidal product of $\lSLO$ \\
  $\fOrd{n}$ & \macroA{fOrd}{n} & finite ordinal with $n$ elements (possibly empty!) \\
  $\spine{n}$ & \macroA{spine}{n} & finite ordinal with $n + 1$ elements (spine of $n$-simplex) \\
  $\FCube$ & \macro{FCube} & Full labelled precube category \\
  $\Cube$ & \macro{Cube} & Labelled precube category (skeletal) \\
  $d_{A,B}$ & \macro{d\_\{A,B\}} & Coface map arising from the inclusion
                                   $U \setminus (A \cup B) \to U$ \\
  $\HDA$ & \macro{HDA} & Category of HDA \\
  $\Cat{C}$ & \macroA{Cat}{C} & Generic category \\
  $\op{\Cat{C}}$ & \macroA{op}{\macroA{Cat}{C}} & Opposite category \\
  $\presheaf{\Cat{I}}$ & \macroA{presheaf}{\macroA{Cat}{I}} & $\SetC$-Valued presheaves indexed by $\Cat{I}$ \\
  $\sCells{X}$ & \macroA{sCells}{X} & Starting cells of HDA \\
  $\aCells{X}$ & \macroA{aCells}{X} & Accepting cells of HDA \\
  $\HDATup{X}$ & \macroA{HDATup}{X} & Tuple that makes an HDA\\
  $\Lang$ & \macro{Lang} & Category of languages\\
  $\iiPom$ & \macro{iiPoms} & The set of interval ipomsets \\
  $\pSrc(\alpha)$ & \macro{pSrc} & Source of path $\alpha$ in an HDA \\
  $\pTrg(\alpha)$ & \macro{pSrc} & Source of path $\alpha$ in an HDA
\end{tabular}

\section{Convolution Product on HDA}
\label{sec:convolution}

\subsection{Day Convolution Precubical Sets is Coproduct}

In \cref{def:hda-tensor} we defined the tensor products of HDA as extending the tensor
product of precubical sets given by Day convolution with appropriate starting and accepting cells.
We show here that the coend formula
\begin{equation}
  \label{eq:hda-tensor-rep}
  X \otimes Y = \int^{V,W} \Cube(-, V \cubeTens W) \times X[V] \times Y[W]
\end{equation}
for Day convolution reduces to a coproduct formula
\begin{equation}
  \label{eq:hda-tensor-coproduct}
  (X \otimes Y)(U) \cong \coprod_{U = V \cubeTens W} X[V] \times Y[W]
\end{equation}
and thus reduces to the standard definition~\cite{BH87:TensorProductsHomotopies,%
  Grandis09:DirectedAlgebraicTopology,Kan55:AbstractHomotopy}

Recall that objects in $\lSLO$ are pairs $(\fOrd{n}, w)$ where $n \in \N$ and $w$ is
a word of length $n$ over $\Sigma$.
Let us write $i_{n,j} \from \fOrd{n} \to \fOrd{n+1}$ for the unique map that does not have $j$
in its image.
Clearly, any map $(\fOrd{n}, w) \to (\fOrd{n+1}, w')$ is determined by the embedding maps
$i_{n,j}$.
Therefore, we will leave out in the remainder the words $w$ and pretend that $\lSLO$ consists
of unlabelled finite ordinals $\fOrd{n}$.
Further, a map $d \from \fOrd{n} \to \fOrd{n+1}$ in $\Cube$ comes with a partition of the
complement image and is therefore given by either $(i_{n,j}, \set{j}, \emptyset)$
or $(i_{n,j}, \emptyset, \set{j})$.
For what follows, this duplication of morphisms also makes no difference and we focus attention
on the maps $i_{n,j}$.

The strategy to show that \cref{eq:hda-tensor-coproduct} holds is to show that any cowedge for
the coend in \cref{eq:hda-tensor-rep} is uniquely determined by a cocone for the coproduct in
\cref{eq:hda-tensor-coproduct}.
Write $F_{n,X,Y} \from \Cube \times \Cube \times \op{\Cube} \times \op{\Cube} \to \SetC$
for the functor given by
\begin{equation*}
  F_{n,X,Y}(\fOrd{m}, \fOrd{k}, \fOrd{m'}, \fOrd{k'})
  = \Cube(\fOrd{n}, \fOrd{m} \cubeTens \fOrd{k}) \times X_{m'} \times Y_{k'}
\end{equation*}
on objects, which gives us
$(X \otimes Y)_{n} = \int^{\fOrd{m},\fOrd{k}}F_{n,X,Y}(\fOrd{m},\fOrd{k},\fOrd{m},\fOrd{k})$.
Suppose now that $f \from F \to C$ is a cowedge, which means that it consists of maps
$f_{m,k} \from \Cube(\fOrd{n}, \fOrd{m} \cubeTens \fOrd{k}) \times X_{m} \times Y_{k} \to C$
in $\SetC$, such that the following diagram commutes for all $u \from \fOrd{m} \to \fOrd{m'}$
and $v \from \fOrd{k} \to \fOrd{k'}$.
\begin{equation*}
\begin{tikzcd}[column sep=2.25em]
	& {\Cube(\fOrd{n}, \fOrd{m'} \cubeTens \fOrd{k'}) \times X_{m'} \times Y_{k'}} \\
	{\Cube(\fOrd{n}, \fOrd{m} \cubeTens \fOrd{k}) \times X_{m'} \times Y_{k'}} && C \\
	& {\Cube(\fOrd{n}, \fOrd{m} \cubeTens \fOrd{k}) \times X_m \times Y_k}
	\arrow["{f_{m',k'}}", shorten >=4pt, from=1-2, to=2-3]
	\arrow["{\id \times X(u) \times Y(v)}"'{pos=0.4}, from=2-1, to=3-2]
	\arrow["{\Cube(\fOrd{n}, u \cubeTens v) \times \id \times \id}", from=2-1, to=1-2]
	\arrow["{f_{m,k}}"', shorten >=4pt, from=3-2, to=2-3]
\end{tikzcd}
\end{equation*}

Suppose now that $n = m + k$ and consider the following diagram, which commutes for all
appropriate choices of $j$ since $f$ is a cowedge.
\begin{equation*}
\begin{tikzcd}[column sep=small,row sep=large]
	& {\Cube(\fOrd{n}, (\fOrd{m+1}) \cubeTens (\fOrd{k-1})) \times X_{m+1} \times Y_{k-1}} \\
	{\Cube(\fOrd{n}, (\fOrd{m+1}) \cubeTens (\fOrd{k-1})) \times X_{m+1} \times Y_{k}} \\
	{\Cube(\fOrd{n}, (\fOrd{m+1}) \cubeTens \fOrd{k}) \times X_{m+1} \times Y_k} && C \\
	{\Cube(\fOrd{n}, \fOrd{m} \cubeTens \fOrd{k}) \times X_{m+1} \times Y_k} \\
	& {\Cube(\fOrd{n}, \fOrd{m} \cubeTens \fOrd{k}) \times X_m \times Y_k}
	\arrow["{f_{m+1,k}}", from=3-1, to=3-3]
	\arrow["{\id \times X(i_{m,j}) \times \id}"'{pos=0.4}, from=4-1, to=5-2]
	\arrow["{\Cube(\fOrd{n}, i_{m,j} \cubeTens \id) \times \id}"{description}, from=4-1, to=3-1]
	\arrow["{f_{m,k}}", curve={height=12pt}, from=5-2, to=3-3]
	\arrow["{\Cube(\fOrd{n}, \id \cubeTens i_{k-1,j}) \times \id}"{description}, from=2-1, to=3-1]
	\arrow["{\id \times \id \times Y(i_{k-1,j})}"{pos=0.4}, from=2-1, to=1-2]
	\arrow["{f_{m+1,k-1}}"', curve={height=-12pt}, from=1-2, to=3-3]
\end{tikzcd}
\end{equation*}

But then $f_{m+1,k}$ is determined from $f_{m+1,k-1}$ and $f_{m,k}$, since any map
$\fOrd{n} \to (\fOrd{m+1}) \cubeTens \fOrd{k}$ is uniquely determined by the only number $j$
that is not in its image.
These are exactly the maps obtained as the image of the maps
$\Cube(\fOrd{n}, i_{m,j} \cubeTens \id)$ and $\Cube(\fOrd{n}, \id \cubeTens i_{k-1,j})$.
Hence, the parts in the coend of \cref{eq:hda-tensor-rep} where $n < k + m$ do not contribute
and it suffices to consider splittings of $n = m + k$.
This gives us \cref{eq:hda-tensor-coproduct}.

\end{document}